\newcommand{\abs}[1]{\left| #1 \right|}
\newcommand{\okra}[1]{\left( #1 \right)}
\newcommand{\kwad}[1]{\left[ #1 \right]}
\newcommand{\klam}[1]{\left\{ #1 \right\}}
\newcommand{\floor}[1]{\left\lfloor #1 \right\rfloor}
\DeclareMathOperator{\essinf}{ess\, inf}
\DeclareMathOperator{\esssup}{ess\, sup}
\DeclareMathOperator{\card}{card}
\DeclareMathOperator{\var}{Var}
\DeclareMathOperator{\sred}{\mathbf{E}}
\newcommand{\peq}{\stackrel{+}{=}}
\newcommand{\pge}{\stackrel{+}{>}}
\newcommand{\ple}{\stackrel{+}{<}}
\newtheorem{definition}{Definition}
\newtheorem{corollary}{Corollary}
\newtheorem{theorem}{Theorem}
\newtheorem{lemma}{Lemma}
\newenvironment*{proof}{\begin{trivlist}\item[]
\noindent\textbf{Proof:}}{$\Box$\par\end{trivlist}}
\newenvironment*{proof*}[1]{\begin{trivlist}\item[]
\noindent\textbf{Proof of #1:}}{$\Box$\par\end{trivlist}}
\author{{\L}ukasz D\k{e}bowski\thanks{
    {\L}. D\k{e}bowski is with
    the Institute of Computer Science, Polish Academy of Sciences, 
    ul. Jana Kazimierza 5, 01-248 Warszawa, Poland 
    (e-mail: ldebowsk@ipipan.waw.pl).}}
\title{Hilberg Exponents: \\ New Measures of Long Memory in the Process} \date{}
\begin{document}

\pagestyle{empty}   
\begin{titlepage}
\maketitle

\begin{abstract}
  The paper concerns the rates of power-law growth of mutual
  information computed for a stationary measure or for a universal
  code. The rates are called Hilberg exponents and four such
  quantities are defined for each measure and each code: two random
  exponents and two expected exponents. A particularly interesting
  case arises for conditional algorithmic mutual information. In this
  case, the random Hilberg exponents are almost surely constant on
  ergodic sources and are bounded by the expected Hilberg
  exponents. This property is a ``second-order'' analogue of the
  Shannon-McMillan-Breiman theorem, proved without invoking the
  ergodic theorem.  It carries over to Hilberg exponents for the
  underlying probability measure via Shannon-Fano coding and Barron
  inequality. Moreover, the expected Hilberg exponents can be linked
  for different universal codes. Namely, if one code dominates
  another, the expected Hilberg exponents are greater for the former
  than for the latter. The paper is concluded by an evaluation of
  Hilberg exponents for certain sources such as the mixture Bernoulli
  process and the Santa Fe processes.
%
  \\[1em]
  \textbf{Keywords}: mutual information, universal coding, Kolmogorov
  complexity, ergodic processes
\end{abstract}


\end{titlepage}
\pagestyle{plain}   


\section{Preliminaries and main results}
\label{secIntroduction}

According to a conjecture by Hilberg
\cite{Hilberg90,CrutchfieldFeldman03}, the mutual information between
two adjacent long blocks of text in natural language grows like a
power of the block length. This property strongly differentiates
natural language from $k$-parameter sources, for which the mutual
information is proportional to the logarithm of the block length
\cite{Atteson99,BarronRissanenYu98,LiYu00}. In
\cite{Debowski11b,Debowski12} a class of stationary processes, called
Santa Fe processes, has been exhibited, which feature the power-law
growth of mutual information. Moreover, it was shown in
\cite{Debowski11b} that Hilberg's conjecture implies Herdan's law, an
integrated version of the famous Zipf's law in linguistics
\cite{Zipf35}.  Later, Dębowski \cite{Debowski13,Debowski13d} tested
Hilberg's conjecture experimentally by approximating the mutual
information with the Lempel-Ziv code \cite{ZivLempel77} and a newly
introduced universal code called switch distribution
\cite{Debowski13d}. Whereas the estimates of mutual information for
the Lempel-Ziv code grow roughly as a power law for both a
$k$-parameter source and natural language, cf.\
\cite{LouchardSzpankowski97}, the other code does reveal the
difference predicted by Hilberg's conjecture: the estimates of mutual
information for the switch distribution grow as a power law for
natural language whereas only logarithmically for a $k$-parameter
source \cite{Debowski14c}.

To provide more theory for Hilberg's conjecture, in this paper we
abstract from its empirical verification and we investigate the
bounding rates for power-law growth of mutual information evaluated
for an arbitrary stationary probability measure or for a universal
code. We call these rates Hilberg exponents, to commemorate Hilberg's
insight. The formal definition rests on the following preliminaries:
\begin{enumerate}
\item Let $\mathbb{X}$ be a countable alphabet.  Consider a
  probability space $(\Omega,\mathcal{J},Q)$ with
  $\Omega=\mathbb{X}^{\mathbb{Z}}$, discrete random variables
  $X_k:\Omega\ni(x_i)_{i\in\mathbb{Z}}\mapsto x_k\in\mathbb{X}$, and a
  probability measure $Q$ which is stationary on
  $(X_i)_{i\in\mathbb{Z}}$ but not necessarily ergodic.  Blocks of
  symbols or variables are denoted as $X_n^m=(X_i)_{n\le i\le m}$.  We
  introduce shorthand notation $Q(x_1^n)=Q(X_1^n=x_1^n)$. The
  expectation of random variable $X$ with respect to $Q$ is written
  $\sred_Q X$ and the variance is $\var_Q X$.
\item Moreover, measure $Q$ will be compared with codes, which for
  uniformity of notation will be represented in our approach as
  incomplete measures $P$, i.e., a code $P$ in our approach is a
  function that satisfies $P(x_1^n)\ge 0$ and the Kraft inequality
  $\sum_{x_1^n} P(x_1^n)\le 1$.\footnote{Here we deviate from the
    standard definition. Usually a code is a function
    $C:\mathbb{X}^*\rightarrow\klam{0,1}^*$ with length
    $\abs{C(x_1^n)}$, where $2^{-\abs{C(x_1^n)}}$ is a certain code in
    our sense.  An instance of such a code $C$ is the Lempel-Ziv code
    \cite{ZivLempel77}.}  Subsequently, for a code $P$ (or for measure
  $Q$), we define the pointwise mutual information between blocks
\begin{align}
  I^P(n)=-\log P(X_{-n+1}^0)-\log P(X_1^n)+\log P(X_{-n+1}^n)
  .
\end{align}
In the formula $\log$ stands for the binary logarithm.
\end{enumerate}

Now we may define the Hilberg exponents:
\begin{definition}[Hilberg exponents] Define the positive logarithm 
  \begin{align}
    \log^+ x
    =
    \begin{cases}
      \log (x+1), & x\ge 0, \\
      0, & x<0.
    \end{cases}
  \end{align}
For a~code $P$ we introduce
\begin{align}
  \gamma^+_P&=\limsup_{n\rightarrow\infty}
  \frac{\log^+ I^P(n)}{\log n} 
  ,
  \\
  \gamma^-_P&=\liminf_{n\rightarrow\infty}
  \frac{\log^+ I^P(n)}{\log n} 
  ,
  \\
  \delta^+_P&=\limsup_{n\rightarrow\infty}
  \frac{\log^+ \sred_Q I^P(n)}{\log n} 
  ,
  \\
  \delta^-_P&=\liminf_{n\rightarrow\infty}
  \frac{\log^+ \sred_Q I^P(n)}{\log n} 
  .
\end{align}
The above numbers will be called: $\gamma^+_P$---the upper random
Hilberg exponent, $\gamma^-_P$---the lower random Hilberg exponent,
$\delta^+_P$---the upper expected Hilberg exponent, and
$\delta^-_P$---the lower expected Hilberg exponent.
\end{definition}

Exponents $\gamma^\pm_P$ are random variables, whereas $\delta^\pm_P$
are constants.  By definition,
\begin{align}
\gamma^+_P\ge\gamma^-_P&\ge 0
,
\\
\delta^+_P\ge\delta^-_P&\ge 0
.
\end{align}
Let us remark that Hilberg exponents for $P=Q$ quantify some sort of
long-range non-Markovian dependence in the process. In particular, for
$Q$ being the measure of an IID process or a hidden Markov process
with a finite number of hidden states, mutual information $\sred_Q
I^Q(n)$ is zero or bounded, respectively, and hence
$\delta^\pm_Q=\gamma^\pm_Q=0$. The same is true for $k$-parameter
sources since $I^Q(n)$ is proportional to $k\log n$
\cite{Atteson99,BarronRissanenYu98,LiYu00}.  However, if information
$I^Q(n)$ grows proportionally to $n^\beta$ where $\beta\in[0,1]$ then
$\gamma^\pm_Q=\delta^\pm_Q=\beta$.  There exist some simple
non-Markovian but still mixing sources \cite{Debowski12}, being a
generalization of the Santa Fe processes, for which $\delta^\pm_Q$ can
be an arbitrary number in range $(0,1)$.

The are a few reasons why we introduce so many Hilberg exponents, for
each stationary measure $Q$ and each code $P$. The first one is that
the pointwise mutual information $I^P(n)$ may grow by leaps and
bounds. Consequently, the upper bounding power-law function may rise
faster than the respective lower bounding power-law function. This may
happen indeed. The second reason is that, a priori, different rates of
growth might be observed for the pointwise and the expected mutual
information. If they are equal, this should be separately proved. As
for the final reason, whereas we are here most interested in case
$P=Q$, some reason for investigating the pointwise mutual information
$I^P(n)$ for $P\neq Q$ is that, paradoxically, sometimes it is easier
to say something about $Q$-typical behavior of $I^P(n)$ than about
$I^Q(n)$. As we will show, this concerns not only statistical
applications, where we do not know $Q$, but also some theoretical
results, where $Q$ is known. Thus our definition is not too generic.

In the following we will show that Hilberg exponents satisfy a number
of relationships that impose some order among them. The first thing we
note are inequalities
\begin{align}
\gamma^-_P\le\gamma^+_P&\le 1
,
\\
\delta^-_P\le\delta^+_P&\le 1
,
\end{align}
which hold in the following cases:
\begin{enumerate}
\item For $P=Q$: Let $k(n)$ and $l(n)$ be nondecreasing functions of
  $n$, where $k(n)+l(n)\rightarrow\infty$. By an easy generalization of
  the Shannon-McMillan-Breiman theorem
  \cite{Breiman57,Chung61,AlgoetCover88}, we have $Q$-almost surely
  that
  \begin{align}
    \label{SMB}
    \lim_{n\rightarrow\infty} \frac{1}{k(n)+l(n)+1} \kwad{-\log Q(X_{-k(n)}^{l(n)})}
    =
    h_Q
    ,
  \end{align}
  where $h_Q$ is the entropy rate of measure $Q$ ($h_Q$ is a random
  variable if $Q$ is nonergodic). Hence $\lim_{n\rightarrow\infty}
  I^Q(n)/n=0$ and so $\gamma^-_Q\le\gamma^+_Q\le 1$ holds $Q$-almost
  surely.  Moreover, by stationarity,
  \begin{align}
    \lim_{n\rightarrow\infty} \frac{1}{k(n)+l(n)+1} \sred_Q
    \kwad{-\log Q(X_{-k(n)}^{l(n)})}
    =
    \sred_Q h_Q
    .
  \end{align}
  Hence $\lim_{n\rightarrow\infty} \sred_Q I^Q(n)/n=0$ and so 
  $\delta^-_Q\le\delta^+_Q\le 1$.
\item For $P$ being universal almost surely and in expectation: Let
  $k(n)$ and $l(n)$ be nondecreasing functions of $n$, where
  $k(n)+l(n)\rightarrow\infty$, as previously. Here, we will say that
  a code $P$ is universal almost surely if for every stationary
  distribution $Q$ we have $Q$-almost surely
  \begin{align}
    \label{StronglyUniversal}
    \lim_{n\rightarrow\infty} \frac{1}{k(n)+l(n)+1} \kwad{-\log P(X_{-k(n)}^{l(n)})}
    =
    h_Q
    ,
  \end{align}
  where $h_Q$ is the entropy rate of measure $Q$. In that case
  $\lim_{n\rightarrow\infty} I^P(n)/n=0$ so
  $\gamma^-_P\le\gamma^+_P\le 1$ holds $Q$-almost surely. Moreover,
  we will say that a code $P$ is universal in expectation if for every
  stationary distribution $Q$ we have
   \begin{align}
     \label{WeaklyUniversal}
    \lim_{n\rightarrow\infty} \frac{1}{k(n)+l(n)+1} \sred_Q
    \kwad{-\log P(X_{-k(n)}^{l(n)})}
    =
    \sred_Q h_Q
    .
  \end{align}
  In that case $\lim_{n\rightarrow\infty} \sred_Q I^P(n)/n=0$ so 
  $\delta^-_P\le\delta^+_P\le 1$.
\end{enumerate}
Equality (\ref{StronglyUniversal}) can be satisfied since stationary
ergodic measures are mutually singular. Moreover, almost surely
universal codes exist if and only if the alphabet $\mathbb{X}$ is
finite \cite{Kieffer78}. Some example of an almost surely universal
code is the Lempel-Ziv code \cite{ZivLempel77}. We also note that an
almost surely universal code $P$ is universal in expectation if $-\log
P(X_{-k(n)}^{l(n)})\le C(k(n)+l(n)+1)$ for some constant $C>0$
\cite{Weissman05}. In particular, the Lempel-Ziv code satisfies this
inequality.

As we have indicated, there are quite many Hilberg exponents, for
different measures and for different codes. Seeking for some order in
this menagerie, we may look for results of three kinds:
\begin{enumerate}
\item For a fixed code $P$ and a measure $Q$, we relate the random
  exponents $\gamma^\pm_P$ and the expected exponents $\delta^\pm_P$.
\item For two codes $P$ and $R$, we relate the exponents of a fixed
  kind, say $\delta^\pm_P$ and $\delta^\pm_R$ for some measure $Q$.
\item For a fixed code $P$ and a measure $Q$, we directly evaluate
  exponents $\gamma^\pm_P$ and $\delta^\pm_P$.
\end{enumerate}
In the following we will present some results of these three sorts.
They have varying weight but they shed some light onto unknown
territory.

The first kind of results could be called ``second-order'' analogues
of the Shannon-McMillan-Breiman (SMB) theorem (\ref{SMB}). The
original idea of the SMB theorem was to relate the asymptotic growth
of pointwise and expected entropies for an ergodic process $Q$ with
$P=Q$. With some partial success, this idea was then extended to the
case when the code $P$ was different to the underlying measure $Q$
\cite{Kieffer73,Barron85,Orey85}.  In contrast, relating the random
Hilberg exponents $\gamma^\pm_P$ and the expected Hilberg exponents
$\delta^\pm_P$ means relating the speed of growth of the pointwise and
expected mutual informations, which are differences of the respective
entropies. This is a somewhat subtler effect than the SMB theorem,
hence our ``second-order'' terminology.  In this domain we have
achieved an interesting result.  For an arbitrary code $P$ with
exponent $\delta^-_P>0$, let us introduce parameter
\begin{align}
  \label{Gap}
  \epsilon_P&=\limsup_{n\rightarrow\infty}
    \frac{\log^+ \kwad{\var_Q I^P(n)/\sred_Q I^P(n)}}{\log n}
    .
\end{align}
Our result is a sandwich bound for random Hilberg exponents in terms
of the expected Hilberg exponents for $P=Q$:
\begin{theorem}
  \label{theoMeasureErgodicAnalogue}
  For an ergodic measure $Q$ over a finite alphabet, random Hilberg
  exponents $\gamma^\pm_Q$ are almost surely constant. Moreover, we
  have $Q$-almost surely
  \begin{align}
    \label{SMBMIQ1}
    \delta^+_Q&\ge\gamma^+_Q\ge\delta^+_Q-\epsilon_Q
    ,
    \\
    \label{SMBMIQ2}
    \delta^-_Q&\ge\gamma^-_Q\ge\delta^-_Q-\epsilon_Q
    ,
  \end{align}
  where the left inequalities hold without restrictions, whereas the
  right inequalities hold for $\delta^-_Q>0$.
\end{theorem}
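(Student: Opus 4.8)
The plan has three ingredients: constancy by ergodicity, an elementary Lipschitz bound in the block length, and a scale-by-scale comparison of $I^Q(n)$ with $\sred_Q I^Q(n)$.

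\emph{Constancy.} I would first show that $\gamma^+_Q$ and $\gamma^-_Q$ are, up to a $Q$-null set, invariant under the shift $T$; ergodicity then makes them $Q$-a.s.\ constant. The point is that $I^Q(n)\circ T$ and $I^Q(n)$ differ by $o(\log n)$ for $Q$-a.e.\ trajectory. Expanding $I^Q(n)=[-\log Q(X_{-n+1}^0)]+[-\log Q(X_1^n)]-[-\log Q(X_{-n+1}^n)]$ via the chain rule, shifting the index by one amounts to inserting and deleting one symbol in each block, which changes the value by finitely many terms of the form $-\log Q(X_j\mid X_a^b)$. Each such term either converges $Q$-a.s.\ (martingale convergence of conditional probabilities) or, when the conditioned variable marches off to $\mp\infty$, is at most $2\log n$ for all large $n$ by Borel--Cantelli applied to $Q\bigl(Q(X_j\mid X_a^b)<n^{-2}\bigr)\le\abs{\mathbb{X}}\,n^{-2}$. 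Since $\abs{\log^+ a-\log^+ b}\le\log^+\abs{a-b}$ and $\log^+(O(\log n))=o(\log n)$, the two limits defining $\gamma^\pm_Q$ are unchanged by $T$.

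\emph{The Lipschitz bound and the upper inequalities.} The engine for moving between scales is the deterministic estimate $\abs{I^Q(n)-I^Q(n+1)}\le 2\log\abs{\mathbb{X}}$, obtained by bounding each one-symbol increment in the expansion above; hence $\abs{I^Q(m)-I^Q(n)}\le 2\abs{m-n}\log\abs{\mathbb{X}}$. For the left inequalities I would use, first, the identity $\sred_Q 2^{-I^Q(n)}\le 1$ (a Kraft-type inequality coming from $\sum Q(x_{-n+1}^0)Q(x_1^n)\le 1$), which gives $\sred_Q\max(I^Q(n),0)\le\sred_Q I^Q(n)+1/\ln 2$ and therefore, by concavity of $\log^+$ on $[0,\infty)$ and Jensen, $\sred_Q\log^+ I^Q(n)\le\log^+\sred_Q I^Q(n)+O(1)$; and second, Markov's inequality $Q(I^Q(n)\ge\lambda)\le(\sred_Q I^Q(n)+1)/\lambda$. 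Since $\sred_Q I^Q(n)$ is nondecreasing and $\delta^+_Q$ is its logarithmic growth rate, for every $\epsilon>0$ one has $\sred_Q I^Q(n)\le n^{\delta^+_Q+\epsilon}$ for all large $n$; feeding this into Markov along a subsequence $n_k$ whose Lipschitz windows (of size of order $n_k^{\delta^+_Q+\epsilon}$) cover the gaps $n_{k+1}-n_k$, and invoking Borel--Cantelli, one obtains $\gamma^\pm_Q\le\delta^\pm_Q$; here the already-proved constancy is useful because it suffices to establish the inequality on an event of positive $Q$-measure.

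\emph{The lower inequalities and the obstacle.} When $\delta^-_Q>0$ one may assume $\epsilon_Q<\delta^-_Q$, for otherwise $\delta^\pm_Q-\epsilon_Q\le 0\le\gamma^\pm_Q$ trivially. For a finite alphabet $I^Q(n)$ is bounded, so moments exist; the variance parameter gives $\var_Q I^Q(n)\le n^{\epsilon_Q+\epsilon}\sred_Q I^Q(n)$ for all large $n$ with $\sred_Q I^Q(n)\to\infty$, whence by the Paley--Zygmund inequality applied to $\max(I^Q(n),0)$ one gets $Q\bigl(I^Q(n)\ge\tfrac12 n^{\delta^-_Q-\epsilon}\bigr)\ge\tfrac18$ for all large $n$ (and the same with $\delta^+_Q$ in place of $\delta^-_Q$ for infinitely many $n$). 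Combining this with the Lipschitz bound, the buffer $\epsilon_Q+\epsilon'-\epsilon>0$ lets one deduce $I^Q(m)\ge m^{\delta^\pm_Q-\epsilon_Q-\epsilon'}$ throughout a window about each such $n$, and a Borel--Cantelli argument along a sufficiently dense subsequence, together once more with the constancy of $\gamma^\pm_Q$, yields $\gamma^\pm_Q\ge\delta^\pm_Q-\epsilon_Q$. The hard part is making the scale-by-scale estimates cooperate: the Markov and Paley--Zygmund bounds have polynomially small but non-summable tails, so the density of the subsequence, the window sizes dictated by the constant $2\log\abs{\mathbb{X}}$, and the exponent $\epsilon_Q$ must be balanced precisely; the delicacy is greatest for the $\liminf$ bound $\gamma^-_Q\ge\delta^-_Q-\epsilon_Q$, where $I^Q(n)$ must be kept above $n^{\delta^-_Q-\epsilon_Q-\epsilon'}$ for \emph{all} large $n$, and it is exactly this balancing that forces the loss $\epsilon_Q$ to appear in the statement.
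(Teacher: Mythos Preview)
The central gap is your ``deterministic estimate $\abs{I^Q(n)-I^Q(n+1)}\le 2\log\abs{\mathbb X}$.'' Expanding the increment gives
\[
I^Q(n+1)-I^Q(n)
=-\log Q(X_{-n}\mid X_{-n+1}^0)-\log Q(X_{n+1}\mid X_1^n)+\log Q(X_{-n},X_{n+1}\mid X_{-n+1}^n),
\]
and none of these conditional log-probabilities is bounded: over a finite alphabet $Q(X_{-n}\mid X_{-n+1}^0)$ can be arbitrarily small. The best one gets by your own Borel--Cantelli trick is $\abs{I^Q(n+1)-I^Q(n)}=O(\log n)$ $Q$-a.s.\ eventually, and summing this over $[2^k,2^{k+1}]$ yields a drift of order $2^k k$, which swamps the value $I^Q(2^k)\le(2^k)^{\delta^+_Q+\epsilon}$ whenever $\delta^+_Q<1$. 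Hence your bridge from the geometric subsequence (where Markov/Borel--Cantelli does sum) to all $n$ collapses, and with it both the upper inequalities $\gamma^\pm_Q\le\delta^\pm_Q$ and the window argument for $\gamma^\pm_Q\ge\delta^\pm_Q-\epsilon_Q$. The lower-bound sketch has a second soft spot: Paley--Zygmund gives $Q(I^Q(n)\ge\tfrac12 n^{\delta^-_Q-\epsilon})\ge c$ for each $n$, but without a pointwise bridge this does not yield any event of positive probability on which $I^Q(n)$ stays large for \emph{all} large $n$, so the appeal to constancy does not close the argument.

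The paper circumvents exactly this obstacle by passing to the conditional Kolmogorov code $S(x_1^n)=2^{-K(x_1^n\mid Q)}$. For algorithmic mutual information one has a genuine near-monotonicity $I^S(n+m)\ge I^S(n)-4\log m-C$ (data processing for $K$), which is what lets one replace the full sequence by $n=2^k$; Markov/Borel--Cantelli then gives $\delta^\pm_S\ge\gamma^\pm_S\ge\zeta^\pm_S$ on ergodic $Q$, and a Chebyshev-type bound on $\sred_Q(I^S(n)+B)^{-1}$ yields $\zeta^\pm_S\ge\delta^\pm_S-\epsilon_S$. The transfer back to $Q$ uses Shannon--Fano and Barron's inequality to show $\abs{I^S(n)-I^Q(n)}\le 4\log n+O(1)$ $Q$-a.s.\ and in mean, whence $\gamma^\pm_S=\gamma^\pm_Q$, $\delta^\pm_S=\delta^\pm_Q$, $\epsilon_S=\epsilon_Q$. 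In short, pointwise data processing holds for $K$ but not for $-\log Q$, and that asymmetry is the whole point of the detour. (Your constancy argument, by contrast, looks salvageable and is a different route from the paper's, which proves shift-invariance of $\gamma^\pm_S$ via $\abs{K(x_1^n\mid Q)-K(x_{t+1}^{t+n}\mid Q)}\le Ct+C$ and then transfers to $\gamma^\pm_Q$.)
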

As we have written, this theorem may be considered an analogue of the
SMB theorem for the mutual information of the underlying measure.

We cannot refrain from mentioning the uncommon technique that has led
us to proving Theorem \ref{theoMeasureErgodicAnalogue}.  It is
remarkable that this result can be demonstrated without invoking the
ergodic theorem. Instead, we use an auxiliary ``Kolmogorov code''
\begin{align}
\label{KolmogorovCode}
S(x_1^n)=2^{-K(x_1^n|F)},
\end{align}
where $K(x_1^n|F)$ is the prefix-free Kolmogorov complexity of a
string $x_1^n$ given an object $F$ on an additional infinite tape
\cite{Chaitin75,LiVitanyi08}. The object $F$ can be another string or,
in our application, a definition of some measure. Respectively,
quantity $I^S(n)$ is the conditional algorithmic mutual
information. By some approximate translation invariance of Kolmogorov
complexity, we can show that the random Hilberg exponents
$\gamma^\pm_S$ are almost surely constant on ergodic sources $Q$.
This fact constitutes a novel contribution of algorithmic information
theory to the study of stochastic processes. Further, using Markov
inequality and Borel-Cantelli lemma, we can show that
$\gamma^\pm_S\le\delta^\pm_S$ on ergodic sources $Q$, as well. To
complete the rough idea of the proof of Theorem
\ref{theoMeasureErgodicAnalogue}, let us mention that
$\gamma^\pm_S=\gamma^\pm_Q$ and $\delta^\pm_S=\delta^\pm_Q$ if we
condition the Kolmogorov complexity on the distribution $Q$, i.e., if
we plug $F=Q$ in (\ref{KolmogorovCode}). This follows by Shannon-Fano
coding and Barron inequality \cite[Theorem 3.1]{Barron85b}. In this
way we obtain the left inequalities in
(\ref{SMBMIQ1})--(\ref{SMBMIQ2}). The right inequalities are
demonstrated in quite a similar fashion, using some auxiliary
quantities for the Kolmogorov code $S$, which we will call inverse
Hilberg exponents.

Now let us proceed to the second kind of results, those for two
codes. Here our results are modest. Let us note that in applications
we often do not know the underlying measure and we cannot compute the
Kolmogorov complexity but we can compute some other universal codes
such as the Lempel-Ziv code. Thus it would be advisable to relate
Hilberg exponents for computable (in the sense of the theory of
computation) universal codes to Hilberg exponents for the Kolmogorov
code $S$ or the underlying measure $Q$. For a universal code, we may
suppose that the longer the code is, the larger Hilberg exponents it
has. This hope is partly confirmed by the following simple theorem.
\begin{theorem}
\label{theoCompareCode}
Let $f_n$ be such that
\begin{align}
  \limsup_{n\rightarrow\infty}
  \frac{\log^+ \abs{f_n}}{\log n}=0
  .
\end{align}
Suppose that for codes $P$ and $R$ and a stationary measure $Q$ we
have
\begin{align}
  \label{PseudoBarron}
  \sred_Q \kwad{-\log P(X_1^n)}&\le \sred_Q \kwad{-\log R(X_1^n)}+f_n
  ,
  \\
  \label{AlmostUniversal}
  \lim_{n\rightarrow\infty}  \frac{1}{n}\sred_Q \kwad{-\log P(X_1^n)}
  &=
  \lim_{n\rightarrow\infty}  \frac{1}{n}\sred_Q \kwad{-\log R(X_1^n)}
  .
\end{align}
Then
\begin{align}
  \label{CompareCode}
  \delta^+_R\ge \delta^-_P
  .
\end{align}
\end{theorem}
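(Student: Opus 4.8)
The plan is to translate everything into statements about the expected code lengths $H_P(m):=\sred_Q\kwad{-\log P(X_1^m)}$ and $H_R(m):=\sred_Q\kwad{-\log R(X_1^m)}$ (finite, as \ref{AlmostUniversal} requires). First I would use stationarity of $Q$ — which identifies the law of $X_{-n+1}^0$ with that of $X_1^n$ and the law of $X_{-n+1}^n$ with that of $X_1^{2n}$ — to rewrite the expected pointwise mutual informations as second differences,
\begin{align}
\label{seconddiff}
\sred_Q I^P(n)=2H_P(n)-H_P(2n), \qquad \sred_Q I^R(n)=2H_R(n)-H_R(2n).
\end{align}
In this language hypothesis \ref{PseudoBarron} reads $H_P(m)\le H_R(m)+f_m$, and hypothesis \ref{AlmostUniversal} says $H_P$ and $H_R$ share a common growth rate $h:=\lim_m H_P(m)/m=\lim_m H_R(m)/m$. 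The difficulty is that \ref{PseudoBarron} controls $H_P-H_R$ only from one side and only at a single scale, so from \ref{seconddiff} one cannot compare $\sred_Q I^P(n)$ with $\sred_Q I^R(n)$ scale by scale — which is exactly why the conclusion mixes the upper exponent of $R$ with the lower exponent of $P$.

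To get around this I would telescope over dyadic block sizes. Dividing \ref{seconddiff} by $2n$ gives $\sred_Q I^R(k)/(2k)=H_R(k)/k-H_R(2k)/(2k)$, so summing over $k=m,2m,4m,\dots$ and using $H_R(2^Jm)/(2^Jm)\to h$ (from \ref{AlmostUniversal}) yields the exact identity $\sum_{j\ge0}\sred_Q I^R(2^jm)/(2^{j+1}m)=H_R(m)/m-h$, and likewise for $P$. Subtracting the two identities and applying \ref{PseudoBarron} then collapses all the uncontrolled pointwise discrepancies into a single term:
\begin{align}
\label{comparison}
\sum_{j\ge0}\frac{\sred_Q I^R(2^jm)}{2^{j+1}m}\ \ge\ \sum_{j\ge0}\frac{\sred_Q I^P(2^jm)}{2^{j+1}m}-\frac{f_m}{m}.
\end{align}
This is the heart of the argument; everything after it is a routine geometric-series estimate.

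For the last part: if $\delta^-_P=0$ there is nothing to prove, so set $\beta:=\delta^-_P\in(0,1]$ (the bound $\beta\le1$ holds because \ref{AlmostUniversal} makes $\sred_Q I^P(n)/n\to0$ via \ref{seconddiff}, whence $\delta^+_P\le1$). Fixing a small $\epsilon\in(0,\beta)$, the definition of $\delta^-_P$ yields $\sred_Q I^P(n)\ge n^{\beta-\epsilon}-1$ for all large $n$; plugging this into every summand on the right of \ref{comparison} and summing the geometric series (convergent since $\beta-\epsilon-1<0$) shows that $\sum_{j\ge0}\sred_Q I^R(2^jm)/(2^{j+1}m)$ is at least a positive constant times $m^{\beta-\epsilon-1}$ for all large $m$ — here one checks that the leftover $1/m$ and $f_m/m$ are negligible, using $\log^+\abs{f_m}/\log m\to0$ to see that $\abs{f_m}=o(m^\eta)$ for every $\eta>0$. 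Now suppose for contradiction that $\delta^+_R<\beta$; pick $\epsilon'$ with $\delta^+_R<\beta-\epsilon'<\beta$ and, in hindsight, take the earlier $\epsilon$ with $\epsilon<\epsilon'$. Then $\sred_Q I^R(n)<n^{\beta-\epsilon'}$ for all large $n$, so the same computation bounds $\sum_{j\ge0}\sred_Q I^R(2^jm)/(2^{j+1}m)$ from above by a constant times $m^{\beta-\epsilon'-1}$; since $\beta-\epsilon-1>\beta-\epsilon'-1$, this eventually contradicts the lower bound. Hence $\delta^+_R\ge\beta=\delta^-_P$, which is \ref{CompareCode}. The only genuine obstacle is the one flagged above: realizing that the dyadic telescoping in \ref{comparison} is what lets a one-sided, single-scale hypothesis like \ref{PseudoBarron} do any work; once that identity is in place the remainder is bookkeeping with geometric series and the definitions of the Hilberg exponents.
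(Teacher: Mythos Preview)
Your proof is correct but takes a different route from the paper's. The paper applies Lemma~\ref{theoExcess} to the nonnegative function $G(n)=H_R(n)-H_P(n)+f_n$ (nonnegative by \eqref{PseudoBarron}, with $G(n)/n\to0$ by \eqref{AlmostUniversal}); this yields $2G(n)-G(2n)\ge 0$ for infinitely many $n$, i.e., $\sred_Q I^R(n)\ge \sred_Q I^P(n)-(2f_n-f_{2n})$ along a subsequence, and the conclusion then follows in one line from the subadditivity of $\log^+$ and the hypothesis on $f_n$.

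You instead sum the full dyadic telescoping series $\sum_{j\ge 0}\sred_Q I^{\bullet}(2^jm)/(2^{j+1}m)=H_{\bullet}(m)/m-h$ and compare the two sides globally. Both arguments rest on the same second-difference identity $\sred_Q I^{\bullet}(n)=2H_{\bullet}(n)-H_{\bullet}(2n)$; the paper packages the dyadic idea into the cited lemma and extracts a single ``infinitely often'' pointwise inequality, whereas you keep the whole series and finish with a geometric-sum comparison. The paper's version is shorter once Lemma~\ref{theoExcess} is in hand; your version is self-contained and yields the slightly more quantitative statement that the dyadic averages of $\sred_Q I^R$ dominate (up to the $f_m/m$ correction) those of $\sred_Q I^P$ at every base scale $m$, not just along a subsequence.
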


The simple proof of the above proposition rests on this lemma:
\begin{lemma}[\mbox{\cite{Debowski11b}}]
\label{theoExcess}
Consider a~function $G:\mathbb{N}\rightarrow\mathbb{R}$ such that
$\lim_k G(k)/k=0$ and $G(n)\ge 0$ for all but finitely many $n$. For
infinitely many $n$, we have $2G(n)- G(2n)\ge 0$.
\end{lemma}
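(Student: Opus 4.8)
I would argue by contradiction, so suppose the claimed conclusion fails. Then only finitely many $n$ satisfy $2G(n)-G(2n)\ge 0$, which means there is an $N$ such that $2G(n)-G(2n)<0$, equivalently $G(2n)>2G(n)$, for every $n\ge N$. Enlarging $N$ if necessary, I may additionally assume $G(n)\ge 0$ for all $n\ge N$, using the second hypothesis on $G$.

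The strategy is then to iterate the inequality $G(2n)>2G(n)$ along a dyadic sequence and compare with the growth restriction $\lim_k G(k)/k=0$. First I dispose of the degenerate possibility that $G(n)=0$ for every $n\ge N$: in that case $2G(N)-G(2N)=0\ge 0$, contradicting the choice of $N$. Hence there exists some $n_0\ge N$ with $G(n_0)>0$. Now, since $2^k n_0\ge N$ for every $k\ge 0$, a straightforward induction on $k$ gives $G(2^k n_0)>2^k G(n_0)$: the base case is trivial, and at each step $G(2^{k+1}n_0)>2G(2^k n_0)>2\cdot 2^k G(n_0)$, where the first inequality is the assumed one applied at $2^k n_0\ge N$ and the induction hypothesis supplies the second (note the iterates stay positive, so nothing degenerates). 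Dividing by $2^k n_0$ yields $G(2^k n_0)/(2^k n_0)>G(n_0)/n_0>0$ for all $k$, and letting $k\to\infty$, with $2^k n_0\to\infty$, contradicts $\lim_k G(k)/k=0$. This completes the argument.

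There is essentially no hard step here; the only points requiring a little care are the degenerate case $G\equiv 0$ eventually, which is handled in one line, and making sure the dyadic iterates $2^k n_0$ all remain in the range $n\ge N$ where both $G\ge 0$ and the assumed strict inequality are valid, so that the induction is legitimate.
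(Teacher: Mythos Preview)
Your argument is correct. Note that the paper does not actually prove this lemma; it is quoted from an earlier work and used as a black box in the derivation of Theorem~\ref{theoCompareCode}, so there is no in-paper proof to compare against. The contradiction-plus-dyadic-iteration approach you give is the natural and standard way to establish such a statement.

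One cosmetic remark: your induction claim $G(2^k n_0)>2^k G(n_0)$ is stated with a strict inequality, so the base case $k=0$ does not literally hold. You mean either $k\ge 1$ with base case $G(2n_0)>2G(n_0)$ (which is exactly the assumed inequality at $n_0\ge N$), or a non-strict inequality at $k=0$ that becomes strict from $k=1$ onward. Either reading repairs the sentence without touching the rest of the argument.
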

To prove Theorem \ref{theoCompareCode}, it suffices to put $G(n)=-\log
R(X_1^n)+\log P(X_1^n)+f_n$ and use subadditivity of the function
$\log^+$, i.e., inequality
\begin{align}
  \label{LogSubadditivity}
  \log^+(x+y)\le\log^+x+\log^+y
  .
\end{align}

The most useful applications of Theorem \ref{theoCompareCode} are as
follows:  Condition (\ref{PseudoBarron}) is satisfied, with $f_n=2\log
n+C$, where $C>0$, for any computable code $R$ and unconditional
Kolmogorov code $P(x_1^n)=2^{-K(x_1^n)}$, where $K(x_1^n)$ is the
unconditional prefix-free Kolmogorov complexity of a string
$x_1^n$. Moreover condition (\ref{PseudoBarron}) is satisfied, with
$f_n=0$, for any code $R$ and $P=Q$.  Condition
(\ref{AlmostUniversal}) is satisfied if $P$ and $R$ are universal in
expectation or if $R$ is universal in expectation and $P=Q$. Moreover,
conditions (\ref{PseudoBarron}) and (\ref{AlmostUniversal}) are
satisfied with $f_n=0$ for $R=Q$ and $P=E$ \cite{GrayDavisson74b},
where $E$ is the random ergodic measure given by
\begin{align}
  \label{RandomErgodic}
  E=Q(\cdot|\mathcal{I})
\end{align}
where $\mathcal{I}$ is the shift-invariant algebra
\cite{Kallenberg97,Debowski11b}.

Let us remark that inequality (\ref{CompareCode}) is not very
strong. We have been able to relate only the upper expected Hilberg
exponents with the lower expected Hilberg exponent. It would be more
interesting if, for two different codes, we were able to compare the
random exponents of the same kind, i.e., an upper exponent with an
upper exponent and a lower exponent with a lower exponent. This
requires relating functions $I^P(n)$ and $I^R(n)$. But even relatively
simple cases, such as comparing $I^Q(n)$ and $I^E(n)$, where $E$ is
the random ergodic measure (\ref{RandomErgodic}), are not trivial. In
that case, $I^Q(n)-I^E(n)$ equals triple mutual information between
two blocks and the shift-invariant algebra, which may be
negative. Therefore we leave strenghtening Theorem
\ref{theoCompareCode} as an open problem.

To end this introduction, we mention the third kind of results, which
concern analytic evaluation of Hilberg exponents for concrete examples
of processes. As we have indicated, the expected Hilberg exponents for
IID processes, Markov processes, and hidden Markov processes do vanish
because the mutual information is either zero or bounded. To
investigate less trivial cases, in this paper, we will consider three
kinds of other sources. These will be: the mixture Bernoulli process,
the original Santa Fe processes mentioned in the very beginning of
this paper \cite{Debowski11b,Debowski12}, and some modification of the
Santa Fe processes. It should be noted that all analyzed processes are
conditionally IID.  The pointwise mutual information $I^Q(n)$ for such
sources is equal to a difference of redundancies. As shown in
\cite{Atteson99,BarronRissanenYu98,LiYu00}, in case of $k$-parameter
processes, such as the mixture Bernoulli process, the redundancy grows
proportionally to $k\log n$. For this reason all Hilberg exponents do
vanish for the mixture Bernoulli process. For completeness we will
reproduce the relevant simple calculation in this paper. In contrast,
the second example, the original Santa Fe process, exhibits a stronger
dependence, namely $I^Q(n)$ grows proportionally to $n^\beta$, where
$\beta\in(0,1)$ is a certain free parameter of the process.  Therefore
all four Hilberg exponents are equal to $\beta$. In the third example,
we can, however, modify the definition of the Santa Fe process so that
the upper expected exponent is $\delta^+_Q=\beta$ for an arbitrary
parameter $\beta\in(0,1)$ whereas the lower expected exponent is
$\delta^-_Q=0$. This shows that upper and lower Hilberg exponents need
not be equal.

The further contents of the paper is as follows. In Section
\ref{secKolmogorov}, we discuss properties of Hilberg exponents for
the Kolmogorov code $S$.  In Section \ref{secMeasure}, we translate
these results for the underlying measure $Q$, proving Theorem
\ref{theoMeasureErgodicAnalogue}.
Finally, in Section \ref{secEvaluation}, we evaluate Hilberg exponents
for the mixture Bernoulli process and the Santa Fe processes.

\section{Kolmogorov code}
\label{secKolmogorov}


A prominent role in our demonstration of Theorem
\ref{theoMeasureErgodicAnalogue} will be played by the Kolmogorov code
(\ref{KolmogorovCode}), whose properties will be discussed in this
section.  Although the results of this section are used in the next
section to prove Theorem \ref{theoMeasureErgodicAnalogue}, they can be
regarded as facts of some independent interest. For this reason, we
assemble them into a separate narrative unit.

To begin with, let us note that, for a finite alphabet $\mathbb{X}$, the
Kolmogorov code is universal almost surely and in expectation, simply
because it is dominated by the Lempel-Ziv code.  Independently,
universality of the Kolmogorov code has been previously shown by
Brudno \cite{Brudno83} in the context of dynamical systems.  Although
Kolmogorov complexity itself is incomputable, the Hilberg exponents
for the Kolmogorov code can be evaluated in some cases and enjoy a few
nice properties. These properties stem from the fact that function
$I^S(n)$ equals the algorithmic mutual information
\begin{align}
  \label{AMI}
  I^S(n)= K(X_{-n+1}^{0}|F)+K(X_{1}^{n}|F)-K(X_{-n+1}^{n}|F) 
  ,
\end{align}
an important concept in algorithmic information theory.

Now let us present some new results. 
A simple but important fact is that the random Hilberg exponents are
almost surely constant on ergodic sources.  This fact is a consequence
of approximate shift-invariance of Kolmogorov complexity. That
property seemingly has not been noticed so far and it provides an
interesting link between algorithmic complexity and ergodic theory.
\begin{theorem}
  \label{theoKolmogorovErgodic}
  Consider code (\ref{KolmogorovCode}) and an ergodic measure $Q$ over
  a finite alphabet $\mathbb{X}$. Exponents $\gamma^-_S$ and
  $\gamma^+_S$ are $Q$-almost surely constant.
\end{theorem}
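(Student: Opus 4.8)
The plan is to prove that $\gamma^+_S$ and $\gamma^-_S$ are $T$-invariant random variables, where $T:\Omega\to\Omega$ is the left shift (so that $X_k\circ T=X_{k+1}$), and then to invoke ergodicity of $Q$. Writing $-\log S(x_1^n)=K(x_1^n|F)$, formula (\ref{AMI}) gives $I^S(n)=K(X_{-n+1}^0|F)+K(X_1^n|F)-K(X_{-n+1}^n|F)$, while evaluating at $T\omega$ replaces each window by the one shifted by one position, so that $I^S(n)\circ T=K(X_{-n+2}^1|F)+K(X_2^{n+1}|F)-K(X_{-n+2}^{n+1}|F)$. Hence $I^S(n)\circ T-I^S(n)$ is a $\pm$-combination of three differences of the form $K(a|F)-K(b|F)$, where in each case $a$ and $b$ are contiguous blocks of equal length obtained from one another by deleting one symbol at one end and appending one new symbol at the other end.

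The core estimate is that such a one-position shift costs only a bounded number of bits. Because the alphabet $\mathbb{X}$ is finite, a single symbol is specified by $\lceil\log|\mathbb{X}|\rceil$ bits, and the transformation ``delete the first symbol, append the prescribed symbol'' is carried out by a program whose size does not depend on $n$; therefore $K(a|b,F)\le c$ and $K(b|a,F)\le c$ for a constant $c$ depending only on $\mathbb{X}$, the reference machine, and $F$. By the easy (subadditive) direction of the chain rule for prefix complexity, $K(a|F)\le K(b|F)+K(a|b^*,F)+O(1)\le K(b|F)+K(a|b,F)+O(1)$, and symmetrically with the roles of $a$ and $b$ exchanged, so $|K(a|F)-K(b|F)|\le c'$ for a constant $c'$ again independent of $\omega$ and $n$. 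Summing the three contributions gives $|I^S(n)\circ T-I^S(n)|\le C$ for all $\omega\in\Omega$ and all $n$, with $C$ a universal constant.

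Given this bound, subadditivity of $\log^+$ in (\ref{LogSubadditivity}) yields $\log^+(I^S(n)\circ T)\le\log^+ I^S(n)+\log^+ C$ together with the reverse inequality; dividing by $\log n$ and passing to the limit shows $\gamma^+_S\circ T=\gamma^+_S$ and $\gamma^-_S\circ T=\gamma^-_S$ everywhere on $\Omega$. Both exponents are measurable, since each $I^S(n)$ is a function of the finitely many variables $X_{-n+1},\dots,X_n$ and $\limsup$, $\liminf$ preserve measurability. Thus $\gamma^\pm_S$ are $T$-invariant measurable random variables, and because $Q$ is ergodic every such function is $Q$-almost surely constant, which is the assertion of the theorem.

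I expect the only delicate point to be the estimate of the second paragraph: one must apply the chain rule for prefix Kolmogorov complexity in the correct form (conditioning on a shortest program $b^*$ and then paying an additional $O(1)$ to recover $b$ from $b^*$), and one must verify that every hidden constant depends only on $\mathbb{X}$, the universal machine, and $F$, but not on the point $\omega$ or the block length $n$. This is precisely where finiteness of $\mathbb{X}$ is used, since over an infinite alphabet a single new symbol could already carry unbounded complexity and the shift-invariance estimate would break down.
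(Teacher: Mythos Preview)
Your proof is correct and follows essentially the same approach as the paper: establish that $I^S(n)$ changes by at most an additive constant under the shift, deduce exact shift-invariance of $\gamma^\pm_S$ via (\ref{LogSubadditivity}), and invoke ergodicity. The only cosmetic difference is that the paper carries out the complexity estimate for a shift by an arbitrary $t$ (obtaining a bound of order $Ct$ using $K(x_{n+1}^{n+t}|F)\le Ct$ for a finite alphabet), whereas you specialize directly to $t=1$, which is all that is needed.
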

\emph{Remark:} The random Hilberg exponents can be different for
different ergodic sources, so for nonergodic sources they can be
random.
\begin{proof}
  For $t>0$, from the shortest program that computes $x_1^n$, we can
  construct a program that computes $x_{t+1}^{t+n}$, whose length
  exceeds the length of the program for $x_1^n$ no more than
  $K(x_{n+1}^{n+t}|F)+C$, where $C>0$. Analogously, from the shortest
  program that computes $x_{t+1}^{t+n}$, we can construct a program
  that computes $x_1^n$, whose length exceeds the length of the
  program for $x_{t+1}^{t+n}$ no more than $K(x_{1}^{t}|F)+C$.  This
  yields
  \begin{align*}
    \abs{K(x_1^n|F)-K(x_{t+1}^{t+n}|F)}\le
    \max\klam{K(x_{1}^{t}|F),K(x_{n+1}^{n+t}|F)}+C
    .
  \end{align*}
  Thus
  \begin{align*}
    &\abs{I^S(n)-
      \kwad{K(X_{t-n+1}^{t}|F)-K(X_{t+1}^{t+n}|F)+K(X_{t-n+1}^{t+n}|F)}}
    \\
    &\qquad
    \le
    3\max\klam{K(X_{-n+1}^{-n+t}|F),K(X_{1}^{t}|F),K(X_{n+1}^{n+t}|F)}+3C
    .
  \end{align*}
  Now we notice that for a finite alphabet we have
  \begin{align*}
    K(X_{-n+1}^{-n+t}|F),K(X_{1}^{t}|F),K(X_{n+1}^{n+t}|F)\le Ct
    ,
  \end{align*}
  where $C>0$.  Hence by inequality~(\ref{LogSubadditivity}) functions
  $\gamma^-_S$ and $\gamma^+_S$ are shift-invariant. Since $Q$ is
  ergodic, it means they must be $Q$-almost surely constant.
\end{proof}

Subsequently, we will give some bounds for the random Hilberg
exponents in terms of the expected Hilberg exponents. To achieve this
goal, we need two lemmas and some additional definition. In the
following, we will write $a(n)\pge b(n)$ if $a(n)+C\ge b(n)$ for all
arguments $n$ and a $C\ge 0$, whereas $a(n)\ple b(n)$ if $a(n)\le
b(n)+C$ under the same conditions. We also write $a(n)\peq b(n)$ if
$a(n)\pge b(n)$ and $a(n)\ple b(n)$.

The following lemma is the first step on our way. It says that mutual
information $I^S(n)$ is almost a nondecreasing function.
\begin{lemma}
  \label{theoKolmogorovGrowing}
  Consider code (\ref{KolmogorovCode}). For all $m\ge 1$, we have
\begin{align}
  \label{ISGrowing}
  I^S(n+m)
  &\pge
  I^S(n)
  - 4\log m
  .
\end{align}
\end{lemma}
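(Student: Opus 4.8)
The starting point is the identity~(\ref{AMI}): it exhibits $I^S(n)=K(X_{-n+1}^{0}|F)+K(X_{1}^{n}|F)-K(X_{-n+1}^{n}|F)$ as the algorithmic mutual information between the past block $X_{-n+1}^{0}$ and the future block $X_{1}^{n}$, so the lemma asserts the monotonicity one expects of a mutual information, degraded only by the additive slack of Kolmogorov complexity. Replacing $X_{-n-m+1}^{0}$ by its suffix $X_{-n+1}^{0}$, or $X_{1}^{n+m}$ by its prefix $X_{1}^{n}$, costs only one endpoint description, i.e.\ at most $2\log m$ in complexity; hence the whole problem reduces to an upper bound on the complexity of the large joint block $K(X_{-n-m+1}^{n+m}|F)$.

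For that I would use a ``bridge'' construction, in the spirit of the proof of Theorem~\ref{theoKolmogorovErgodic}. Run the shortest program for the inner block $X_{-n+1}^{n}$; this recovers that string, hence its two equal halves $X_{-n+1}^{0}$, $X_{1}^{n}$, and the number $n$. Appending a program that computes $X_{-n-m+1}^{0}$ from $X_{-n+1}^{0}$ then yields the $m$-symbol left extension (its length is read off as $(n+m)-n$), a symmetric program yields the right extension, and concatenating the four pieces produces $X_{-n-m+1}^{n+m}$. This gives
\begin{align*}
  K(X_{-n-m+1}^{n+m}|F)
  \ple
  K(X_{-n+1}^{n}|F)
  + K\okra{X_{-n-m+1}^{0}\mid X_{-n+1}^{0},F}
  + K\okra{X_{1}^{n+m}\mid X_{1}^{n},F}
  .
\end{align*}
I would then rewrite each conditional complexity as a difference of unconditional ones, $K\okra{X_{-n-m+1}^{0}\mid X_{-n+1}^{0},F}\ple K(X_{-n-m+1}^{0}|F)-K(X_{-n+1}^{0}|F)+2\log m$ and likewise on the right (the $2\log m$ being the cost of switching between a concatenation and the corresponding pair, the inserted block having length $m$). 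Substituting and rearranging, the unconditional complexities recombine precisely into $I^S(n+m)$ on one side and $I^S(n)$ on the other, with total overhead $4\log m$; this is the claim.

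The step I expect to be the real obstacle is the passage from $K(w\mid u,F)$ to $K(w|F)-K(u|F)$. One direction, $K(w|F)\ple K(u|F)+K(w\mid u,F)$, is plain subadditivity; the direction used above is the symmetry-of-information inequality of L\'evin and G\'acs, whose conditioning is on the shortest program $u^{*}$ of the shared block rather than on the string $u$ itself, and the gap $K(w\mid u,F)-K(w\mid u^{*},F)$ is, in general, logarithmic in $\abs{u}$ and not in $m$. One must therefore arrange the bridge so that the shared blocks $X_{-n+1}^{0}$ and $X_{1}^{n}$ enter through their shortest programs --- supplied once, inside the inner-block program, so that the mutual information $I^S(n)$ is not itself over-counted --- and check that only the $m$-symbol increments, never the full length-$n$ blocks, get charged logarithmically; this is immediate in the regime where $m$ is comparable to $n$, which is the one relevant later. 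As a trivial sanity check, for $m=1$ no such care is needed, since adjoining one symbol to either block perturbs each of the three complexities in~(\ref{AMI}) by $O(1)$, whence $I^S(n+1)\peq I^S(n)$; the content of the lemma is precisely that iterating this one-symbol estimate, which would cost $O(m)$, can be sharpened to $O(\log m)$ by the bridge construction.
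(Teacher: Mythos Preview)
Your diagnosis of the difficulty is exactly right, but the fix you sketch does not close the gap, and the lemma as stated (with $4\log m$, not $4\log n$) does not follow from your argument.

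The issue is precisely the one you name. Your step~1 gives conditioning on the \emph{string} $X_{-n+1}^{0}$, because that is all the program for $X_{-n+1}^{n}$ produces; it does not produce a shortest program for the half-block. Your step~2 needs conditioning on $X_{-n+1}^{0}$ together with $K(X_{-n+1}^{0}|F)$, since symmetry of information only gives
\[
  K\okra{X_{-n-m+1}^{0}\mid X_{-n+1}^{0},K(X_{-n+1}^{0}|F),F}
  \ple K(X_{-n-m+1}^{0}|F)-K(X_{-n+1}^{0}|F)+2\log m.
\]
The mismatch costs $O(\log n)$, and your suggestion to ``arrange the bridge so that the shared blocks enter through their shortest programs'' cannot be carried out: the shortest program for $X_{-n+1}^{n}$ is not built from shortest programs for its halves, and if you instead start from the two half-programs you have spent $K(X_{-n+1}^{0}|F)+K(X_{1}^{n}|F)$, not $K(X_{-n+1}^{n}|F)$, which is exactly $I^S(n)$ too much. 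Your remark that the regime $m\asymp n$ is ``the one relevant later'' is true --- the application in Lemma~\ref{theoKolmogorovSubsequence} only needs a $O(\log n)$ error --- but it does not prove the lemma as stated.

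The paper avoids the whole problem by passing to the pair-based mutual information $I(u:v|F)=K(u|F)+K(v|F)-K(u,v|F)$. With $a=X_{-n-m+1}^{-n}$, $b=X_{-n+1}^{0}$, $c=X_{1}^{n}$, $d=X_{n+1}^{n+m}$ one checks, using only $K(uv|F)\ple K(u,v|F)\ple K(uv|F)+K(\abs{v}\mid uv,F)$, that $I^S(n)\peq I(b:c|F)$ (the split of a block of length $2n$ into equal halves is free) and $I^S(n+m)+4\log m\pge I(a,b:c,d|F)$ (two splits each costing $K(m)\ple 2\log m$). Then the data-processing inequality
\[
  I(a,b:c,d|F)\pge I(a,b:c|F)\pge I(b:c|F)
\]
follows from the chain rule $K(u,v|F)\peq K(u|F)+K(v\mid u,K(u|F),F)$ and nonnegativity of conditional mutual information; here the troublesome extra conditioning on $K(u|F)$ is absorbed inside $K(u,v|F)$ and never charged separately. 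So the $\log$-penalty is paid only for the length-$m$ increments $a$ and $d$, which is what makes the constant $4\log m$ rather than $O(\log n)$.
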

\begin{proof}
  For strings $u$ and $v$, denote the algorithmic mutual information
  \begin{align}
    \label{AMI2}
    I(u:v|F)=K(u|F)+K(v|F)-K(u,v|F)
    .
  \end{align}
  We have $I(u:v|F)\pge 0$ \cite{LiVitanyi08}.
  Concatenating strings decreases their complexity. Namely,
  $$K(uv|F)\ple K(u,v|F)\ple
  K(uv|F)+K(\abs{v}|uv,F),$$ where $\abs{v}$ is the
  length of $v$. Hence
  \begin{align*}
    I^S(n+m)+4\log m\pge I(a,b:c,d|F)
  \end{align*}
  whereas
  \begin{align*}
    I^S(n)\peq I(b:c|F)
  \end{align*}
  where $a=X_{-n+m+1}^{-n}$, $b=X_{-n+1}^{0}$, $c=X_{1}^{n}$, and $d=X_{n+1}^{m}$.

  Using identity $K(u,v|F)\peq K(u|F)+K(v|u,K(u|F),F)$
  \cite{LiVitanyi08}, we can further show the data processing
  inequality for the algorithmic mutual information,
  \begin{align*}
    I(a,b:c,d|F)
    &\peq I(a,b:c|F)+I(a,b:d|c,K(c|F),F)
    \\
    &\pge I(a,b:c|F)
    \\
    &\peq I(b:c|F)+I(a:c|b,K(b|F),F)
    \\
    &\pge I(b:c|F)
    ,
  \end{align*}
  which proves the claim.
\end{proof}

With the above lemma we can prove another auxiliary result. This
result says that Hilberg exponents for the Kolmogorov code can be
defined using only a subsequence of exponentially growing block
lengths.
\begin{lemma}
\label{theoKolmogorovSubsequence}
Consider code (\ref{KolmogorovCode}). We have
\begin{align}
  \label{UpperRandomS}
  \gamma^+_S&=\limsup_{k\rightarrow\infty}
  \frac{\log^+ I^S(2^k)}{\log 2^k} 
  ,
  \\
  \label{LowerRandomS}
  \gamma^-_S&=\liminf_{k\rightarrow\infty}
  \frac{\log^+ I^S(2^k)}{\log 2^k} 
  ,
  \\
  \label{UpperExpectedS}
  \delta^+_S&=\limsup_{k\rightarrow\infty}
  \frac{\log^+ \sred_Q I^S(2^k)}{\log 2^k} 
  ,
  \\
  \label{LowerExpectedS}
  \delta^-_S&=\liminf_{k\rightarrow\infty}
  \frac{\log^+ \sred_Q I^S(2^k)}{\log 2^k} 
  .
\end{align}
\end{lemma}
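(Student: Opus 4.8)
Everything rests on Lemma~\ref{theoKolmogorovGrowing}, which provides a constant $C\ge 0$ with $I^S(n+m)+C\ge I^S(n)-4\log m$ for all $n,m\ge 1$: the function $I^S$ is \emph{almost nondecreasing}, in that it can only drop slowly. The plan is to use this to pin $I^S(n)$ between its values at the two nearest dyadic block lengths and then normalize by $\log n$. So I would fix $n$, pick $k$ with $2^k\le n\le 2^{k+1}$, and note that $n-2^k$ and $2^{k+1}-n$ both lie in $[0,2^k]$, hence have binary logarithm at most $k$. Applying Lemma~\ref{theoKolmogorovGrowing} once with the pair $(2^k,\,n-2^k)$ and once with $(n,\,2^{k+1}-n)$ --- the estimates being trivial at the dyadic endpoints --- gives the sandwich
\[
  I^S(2^k)-4k-C\ \le\ I^S(n)\ \le\ I^S(2^{k+1})+4k+C ,\qquad 2^k\le n\le 2^{k+1}.
\]

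Next I would pass to the positive logarithm. Since $\log^+$ is nondecreasing and subadditive (inequality~(\ref{LogSubadditivity})), and since $I^S(n)\pge 0$ so that negative values of $I^S$ do no harm under $\log^+$, the sandwich implies
\[
  \log^+ I^S(2^k)-\log^+(4k+C)\ \le\ \log^+ I^S(n)\ \le\ \log^+ I^S(2^{k+1})+\log^+(4k+C) .
\]
Now divide by $\log n$. Because $k\le\log n\le k+1$, the normalizations $\log 2^k$, $\log 2^{k+1}$, $\log n$ agree up to a factor tending to $1$, while the additive slack $\log^+(4k+C)=O(\log k)=o(\log n)$ disappears in the limit. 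Taking $\limsup_{n\to\infty}$ of the right-hand inequality --- and observing that as $n$ runs over $\mathbb{N}$ the index $2^{k+1}$ exhausts the powers of $2$ --- yields $\gamma^+_S\le\limsup_{k\to\infty}\log^+ I^S(2^k)/\log 2^k$; the opposite inequality is automatic, since $(2^k)_{k\ge 1}$ is a subsequence of $\mathbb{N}$. This proves (\ref{UpperRandomS}), and taking $\liminf$ of the left-hand inequality proves (\ref{LowerRandomS}) in the same manner (here the $\liminf$ along the dyadic subsequence is automatically $\ge$ the full $\liminf$).

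For the expected exponents it suffices to remark that the constant $C$ in Lemma~\ref{theoKolmogorovGrowing} is uniform in the sample point, so applying $\sred_Q$ to $I^S(n+m)+C\ge I^S(n)-4\log m$ shows that $\sred_Q I^S$ is almost nondecreasing with the same constant; the two preceding paragraphs then go through verbatim with $I^S$ replaced by $\sred_Q I^S$, giving (\ref{UpperExpectedS})--(\ref{LowerExpectedS}).

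I do not expect a real difficulty: the whole content is carried by Lemma~\ref{theoKolmogorovGrowing}, and the rest is bookkeeping. The only point requiring care is that $\log^+$ is not additive, so the additive error $4k+C$ must be absorbed through subadditivity (rather than a logarithm-of-product identity) and then shown negligible --- it is only of order $\log\log n$ --- against the normalizing $\log n$; one must also carry along the harmless discrepancy between $\log n$ and the integer $k$, which washes out in both the $\limsup$ and the $\liminf$.
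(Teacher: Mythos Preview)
Your proposal is correct and follows essentially the same route as the paper: both arguments sandwich $I^S(n)$ between $I^S(2^k)-4k-C$ and $I^S(2^{k+1})+4k+C$ for $2^k\le n\le 2^{k+1}$ via Lemma~\ref{theoKolmogorovGrowing}, then absorb the $O(k)$ slack using subadditivity of $\log^+$ and the fact that $\log n\in[k,k+1]$. The paper likewise dispatches the expected exponents by noting the argument is analogous, exactly as you do by taking $\sred_Q$ of the deterministic inequality in Lemma~\ref{theoKolmogorovGrowing}.
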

\begin{proof}
  By Lemma \ref{theoKolmogorovGrowing}, for $n=2^k+m$, where $0\le m<
  2^k$, we have
    \begin{align*}
      I^S(n)\le I^S(2^{k+1})+4\log (2^k-m)+C\le I^S(2^{k+1})+4k+C
      .
    \end{align*}
    Thus
    \begin{align*}
      \limsup_{n\rightarrow\infty}
      \frac{\log^+ I^S(n)}{\log n}
      &\le
      \limsup_{k\rightarrow\infty}
      \frac{\log^+ (I^S(2^{k+1})+4k+C)}{\log 2^k}
      \\
      &\le
      \limsup_{k\rightarrow\infty}
      \frac{\log^+ I^S(2^{k+1})}{\log 2^k}
      +
      \limsup_{k\rightarrow\infty}
      \frac{\log^+ (4k+C)}{\log 2^k}
      \\
      &=
      \limsup_{k\rightarrow\infty}
      \frac{\log^+ I^S(2^{k})}{\log 2^k}
      .
    \end{align*}
    This proves (\ref{UpperRandomS}) since trivially we have a
    converse inequality. 

    Now let $n=2^k+m$, where $0< m\le 2^k$. From (\ref{ISGrowing}), we
    obtain
    \begin{align*}
      I^S(n)\ge I^S(2^{k})-4\log m-C\ge I^S(2^{k})-4k-C
      .
    \end{align*}
    Thus
    \begin{align*}
      \liminf_{n\rightarrow\infty}
      \frac{\log^+ I^S(n)}{\log n}
      &\ge
      \liminf_{k\rightarrow\infty}
      \frac{\log^+ (I^S(2^{k})-4k-C)}{\log 2^{k+1}}
      \\
      &\ge
      \liminf_{k\rightarrow\infty}
      \frac{\log^+ I^S(2^{k})}{\log 2^{k+1}}
      -
      \limsup_{k\rightarrow\infty}
      \frac{\log^+ (4k+C)}{\log 2^{k+1}}
      \\
      &=
      \liminf_{k\rightarrow\infty}
      \frac{\log^+ I^S(2^{k})}{\log 2^k}
      .
    \end{align*}
    This proves (\ref{LowerRandomS}) for trivially we have a converse
    inequality.

    The proofs of (\ref{UpperExpectedS}) and (\ref{LowerExpectedS})
    are analogous.
\end{proof}

To finish the preparations, we need some auxiliary concept. Recall
that algorithmic mutual information (\ref{AMI2}) is greater than a
constant. Using this result, for the Kolmogorov code we can introduce
the following inverse Hilberg exponents.
\begin{definition}[inverse Hilberg exponents]
  Consider code (\ref{KolmogorovCode}).  Let $B$ be such that
  $I^S(n)+B\ge 1$. Define
  \begin{align}
    \zeta^+_S&=\limsup_{n\rightarrow\infty}
    \frac{\log^+ \kwad{\sred_Q (I^S(n)+B)^{-1}}^{-1}}{\log n}
    ,
    \\
    \zeta^-_S&=\liminf_{n\rightarrow\infty}
    \frac{\log^+ \kwad{\sred_Q (I^S(n)+B)^{-1}}^{-1}}{\log n}     
    .
  \end{align}
  The above numbers will be called: $\zeta^+_S$---the upper inverse
  expected Hilberg exponent and $\zeta^-_S$---the lower inverse
  expected Hilberg exponent.
\end{definition}
We have $\zeta^+_S\ge \zeta^-_S\ge 0$, whereas $\delta^+_S\ge
\zeta^+_S$ and $\delta^-_S\ge \zeta^-_S$ by the Jensen inequality
$\sred_Q X\ge \kwad{\sred_Q X^{-1}}^{-1}$ for $X> 0$.

Now we may state and prove the theorem which links the expected and
the random Hilberg exponents for the Kolmogorov code. It will be first
stated and proved for general stationary (not necessarily ergodic)
measures over a countable alphabet. Subsequently, we will present a
corollary for ergodic measures and some strengthening for a finite
alphabet.
\begin{theorem}
\label{theoKolmogorov}
Consider code (\ref{KolmogorovCode}) and an arbitrary stationary
measure $Q$. Then:
\begin{enumerate}
\item $\delta^+_S\ge\gamma^+_S$ $Q$-almost surely and $\esssup_Q
  \gamma^+_S\ge\zeta^+_S$.
\item $\delta^-_S\ge\essinf_Q \gamma^-_S$ and $\gamma^-_S\ge\zeta^-_S$
  $Q$-almost surely.
\end{enumerate}
\end{theorem}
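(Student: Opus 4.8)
The plan is to derive all four assertions from Markov's inequality together with the Borel--Cantelli lemma, applied to the nonnegative random variables $I^S(n)+B$ (for the claims about $\gamma^+_S$, $\delta^+_S$, $\zeta^+_S$) or to the $(0,1]$-valued variables $(I^S(n)+B)^{-1}$ (for those about $\gamma^-_S$, $\zeta^-_S$), combined with the reduction to the geometric subsequence $n=2^k$ from Lemma \ref{theoKolmogorovSubsequence}. Restricting to $n=2^k$ is what makes the Borel--Cantelli series summable at the cost of an arbitrarily small loss in the exponent; over all $n$ one would instead lose a whole unit. Before starting I would record two trivialities, $g(n):=\kwad{\sred_Q(I^S(n)+B)^{-1}}^{-1}\ge 1$ and hence $\log g(n)\ge\log^+ g(n)-1$, and I would note that $\zeta^\pm_S$ obey the same subsequence representation as $\delta^\pm_S$: from Lemma \ref{theoKolmogorovGrowing} one gets that, for $2^k\le n<2^{k+1}$, the quantity $g(n)$ differs from $g(2^k)$ and from $g(2^{k+1})$ by at most a factor polynomial in $k$, so that $\zeta^\pm_S$ may be computed along $n=2^k$ exactly as in the proof of Lemma \ref{theoKolmogorovSubsequence}. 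Throughout, the additive constants $B$, $C$ and the ``$+1$'' inside $\log^+$ are harmless because the denominators $\log n$ tend to infinity.

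For $\delta^+_S\ge\gamma^+_S$ in (i) I would argue as follows, assuming $\delta^+_S<\infty$ (otherwise nothing to prove). Fix a rational $\beta>\delta^+_S$ and pick $\alpha\in(\delta^+_S,\beta)$. By Lemma \ref{theoKolmogorovSubsequence} we have $\sred_Q I^S(2^k)<2^{\alpha k}$ for all large $k$, so $\sred_Q(I^S(2^k)+B)\le 2\cdot 2^{\alpha k}$, and Markov's inequality gives $Q(I^S(2^k)+B\ge 2^{\beta k})\le 2\cdot 2^{(\alpha-\beta)k}$, a convergent series. By Borel--Cantelli, $Q$-almost surely $I^S(2^k)+B<2^{\beta k}$ for all large $k$, whence $\log^+ I^S(2^k)<\beta k+1$ and, by Lemma \ref{theoKolmogorovSubsequence} again, $\gamma^+_S\le\beta$ $Q$-a.s.; intersecting over rational $\beta\downarrow\delta^+_S$ yields $\gamma^+_S\le\delta^+_S$ $Q$-a.s. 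The two ``static'' inequalities $\esssup_Q\gamma^+_S\ge\zeta^+_S$ of (i) and $\delta^-_S\ge\essinf_Q\gamma^-_S$ of (ii) I would prove by contradiction in parallel, with no need of Markov or Borel--Cantelli. If $\esssup_Q\gamma^+_S<\alpha<\zeta^+_S$ then $\gamma^+_S<\alpha$ $Q$-a.s., so the increasing events $A_m=\klam{I^S(n)+B\le 2n^\alpha\text{ for all }n\ge m}$ cover a set of full measure, hence $Q(A_m)\to 1$ and some $A_m$ has $Q(A_m)\ge\tfrac12$; then $\sred_Q(I^S(n)+B)^{-1}\ge\tfrac12(2n^\alpha)^{-1}$ for $n\ge m$, so $g(n)\le 4n^\alpha$ and $\zeta^+_S\le\alpha$, a contradiction. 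Symmetrically, if $\delta^-_S<\alpha<\essinf_Q\gamma^-_S$ then $\gamma^-_S>\alpha$ $Q$-a.s., so on a set of measure $\ge\tfrac12$ one has $I^S(n)\ge n^\alpha-1$ for all large $n$, whence $\sred_Q I^S(n)\ge\tfrac12(n^\alpha-1)-C$ (the complement contributes at least $-C$ since $I^S(n)\pge 0$), forcing $\delta^-_S\ge\alpha$, again a contradiction.

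The remaining and least routine part is $\gamma^-_S\ge\zeta^-_S$ $Q$-a.s. I would apply Markov's inequality to $(I^S(2^k)+B)^{-1}$, whose mean is $g(2^k)^{-1}$: for fixed $\epsilon>0$, $Q\okra{I^S(2^k)+B\le g(2^k)2^{-\epsilon k}}=Q\okra{(I^S(2^k)+B)^{-1}\ge 2^{\epsilon k}/g(2^k)}\le 2^{-\epsilon k}$, a convergent series, so by Borel--Cantelli, $Q$-almost surely $I^S(2^k)+B>g(2^k)2^{-\epsilon k}$ for all large $k$. A case split on whether $g(2^k)2^{-\epsilon k}\ge 2B$ turns this into the single bound $\log^+ I^S(2^k)\ge\log^+ g(2^k)-\epsilon k-O(1)$ valid for all large $k$ --- directly in the first case, and because the right-hand side is then already negative (while $\log^+ I^S(2^k)\ge 0$) in the second. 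Dividing by $\log 2^k=k$, taking $\liminf_k$, and invoking the subsequence representations of $\gamma^-_S$ and of $\zeta^-_S$ gives $\gamma^-_S\ge\zeta^-_S-\epsilon$ $Q$-a.s.; intersecting over $\epsilon=1/j$ finishes the proof. I expect this last step to be the main obstacle: the naive move is to run Markov over all $n$, which forces the threshold $g(n)/n^{1+\epsilon}$ and wrecks the bound by a full unit, so one is obliged to pass through the geometric subsequence --- which in turn needs the subsequence form of $\zeta^-_S$, and hence Lemma \ref{theoKolmogorovGrowing} --- in order to keep the loss at the tunable parameter $\epsilon$.
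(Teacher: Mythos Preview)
Your proof is correct and follows essentially the same route as the paper: Markov plus Borel--Cantelli along $n=2^k$ (using Lemma~\ref{theoKolmogorovSubsequence}) for the two almost-sure inequalities, and an elementary ``eventually the bound holds on a set of large measure'' argument for the two $\esssup/\essinf$ inequalities. The only cosmetic differences are that the paper uses the fixed threshold $(2^k)^{\zeta^-_S-\epsilon}$ in the last step (so it never needs your subsequence form of $\zeta^-_S$; the trivial bound $\liminf_k \log^+ g(2^k)/k \ge \zeta^-_S$ would already suffice for your variant), and that it lets $p(n)\to 1$ rather than fixing $Q(A_m)\ge \tfrac12$ in the static parts.
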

\begin{proof}
  \begin{enumerate}
  \item Let $\epsilon>0$. Observe that from the Markov inequality we
    have
    \begin{align*}
      \sum_{k=1}^\infty
      Q\okra{\frac{I^S(2^k)+B}{(2^k)^{\delta^+_S+\epsilon}}\ge 1}
      &\le
      \sum_{k=1}^\infty
      \frac{\sred_Q I^S(2^k)+B}{(2^k)^{\delta^+_S+\epsilon}}
      \\
      &\le
      A 
      +
      \sum_{k=1}^\infty
      \frac{(2^k)^{\delta^+_S+\epsilon/2}}{(2^k)^{\delta^+_S+\epsilon}}
      <\infty
      ,
    \end{align*}
    where $A<\infty$. Hence, by the Borel-Cantelli lemma, we have
    $Q$-almost surely
    \begin{align*}
      \limsup_{k\rightarrow\infty}
      \frac{\log^+ (I^S(2^k)+B)}{\log 2^k} 
      \le \delta^+_S+\epsilon
      .
    \end{align*}
    By arbitrariness of $\epsilon$ and by
    inequality~(\ref{LogSubadditivity}), the bound is true with $\epsilon=0$
    and $B=0$, which implies $\delta^+_S\ge\gamma^+_S$ $Q$-almost
    surely by Lemma~\ref{theoKolmogorovSubsequence}.

    Now we will prove that $\esssup_Q \gamma^+_S\ge\zeta^+_S$. Denote
    $\esssup_Q \gamma^+_S=\beta$ and let $\epsilon>0$. Then
    $Q(\gamma^+_S>\beta+\epsilon/2)=0$ whence
    \begin{align}
      \label{IOGreater}
      Q\okra{I^S(n)+B\ge n^{\beta+\epsilon} \text{ infinitely often}}=0
      .
    \end{align}
    Denote $p(n)=Q\okra{I^S(n)+B<n^{\beta+\epsilon}}$. We have
    \begin{align*}
      \sred_Q \okra{I^S(n)+B}^{-1}\ge n^{-\beta-\epsilon} p(n).
    \end{align*}
    By (\ref{IOGreater}), $\lim_{n\rightarrow\infty}p(n)=1$. Hence
    $\zeta^+_S\le \beta+\epsilon$. Since $\epsilon$ was arbitrary,
    this implies the claim.

  \item The proof is analogous to the proof of (i). Write
    $\essinf_Q \gamma^-_S=\beta$ and let $\epsilon>0$. Then
    $Q(\gamma^-_S<\beta-\epsilon/2)=0$ whence
    \begin{align}
      \label{IOLess}
      Q\okra{I^S(n)\le n^{\beta-\epsilon} \text{ infinitely often}}=0
      .
    \end{align}
    Denote $p(n)=Q\okra{I^S(n)>n^{\beta-\epsilon}}$. We have
    \begin{align*}
      \sred_Q I^S(n)\ge n^{\beta-\epsilon} p(n).
    \end{align*}
    By (\ref{IOLess}), $\lim_{n\rightarrow\infty}p(n)=1$. Thus
    $\delta^-_S\ge \beta-\epsilon$. Since $\epsilon$ was arbitrary,
    this implies $\delta^-_S\ge\essinf_Q \gamma^-_S$.

    Now we will show the second claim. Let $\epsilon>0$. From the
    Markov inequality
    \begin{align*}
      \sum_{k=1}^\infty
      Q\okra{\frac{I^S(2^k)+B}{(2^k)^{\zeta^-_S-\epsilon}}\le 1}
      &\le
      \sum_{k=1}^\infty
      \frac{\sred_Q (I^S(2^k)+B)^{-1}}{(2^k)^{-\zeta^-_S+\epsilon}}
      \\
      &\le
      A 
      +
      \sum_{k=1}^\infty
      \frac{(2^k)^{-\zeta^-_S+\epsilon/2}}{(2^k)^{-\zeta^-_S+\epsilon}}
      <\infty
      ,
    \end{align*}
    where $A<\infty$. Thus, by the Borel-Cantelli lemma, $Q$-almost
    surely
    \begin{align*}
      \liminf_{k\rightarrow\infty}
      \frac{\log^+ (I^S(2^k)+B)}{\log 2^k} 
      \ge \zeta^-_S-\epsilon
      .
    \end{align*}
    As in (i), we may put $\epsilon=0$ and $B=0$, whence
    $\gamma^-_S\ge\zeta^-_S$ $Q$-almost surely follows by
    Lemma~\ref{theoKolmogorovSubsequence}.
  \end{enumerate}
\end{proof}

Let us also present some add-ons to Theorem \ref{theoKolmogorov}.
Using Theorem \ref{theoKolmogorovErgodic}, Theorem
\ref{theoKolmogorov} can be specialized for ergodic measures over a
finite alphabet in an interesting way, which will be used later.
\begin{corollary}
  \label{theoKolmogorovErgodicAnalogue}
  By Theorem \ref{theoKolmogorovErgodic}, for an ergodic measure $Q$
  over a finite alphabet, equalities $\gamma^+_S=\esssup_Q \gamma^+_S$
  and $\gamma^-_S=\essinf_Q \gamma^-_S$ hold $Q$-almost surely. Hence,
  $Q$-almost surely we have
\begin{align}
  \label{SMBMI1}
  \delta^+_S&\ge\gamma^+_S\ge\zeta^+_S
  ,
  \\
  \label{SMBMI2}
  \delta^-_S&\ge\gamma^-_S\ge\zeta^-_S
  .
\end{align}
\end{corollary}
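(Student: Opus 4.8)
The plan is to combine Theorem \ref{theoKolmogorovErgodic} with Theorem \ref{theoKolmogorov} directly; no genuinely new work is needed. First I would record the elementary observation that a random variable which is $Q$-almost surely constant coincides $Q$-almost surely with both its essential supremum and its essential infimum. Applying this to $\gamma^+_S$ and to $\gamma^-_S$, which are $Q$-almost surely constant by Theorem \ref{theoKolmogorovErgodic} because $Q$ is ergodic over a finite alphabet, yields the asserted equalities $\gamma^+_S=\esssup_Q\gamma^+_S$ and $\gamma^-_S=\essinf_Q\gamma^-_S$ $Q$-almost surely.

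Next I would feed these equalities into Theorem \ref{theoKolmogorov}. Part (i) supplies $\delta^+_S\ge\gamma^+_S$ $Q$-almost surely together with the (deterministic) inequality $\esssup_Q\gamma^+_S\ge\zeta^+_S$; substituting $\esssup_Q\gamma^+_S=\gamma^+_S$, valid $Q$-almost surely, into the latter gives $\gamma^+_S\ge\zeta^+_S$ $Q$-almost surely, and chaining the two bounds produces (\ref{SMBMI1}). Symmetrically, part (ii) supplies the deterministic inequality $\delta^-_S\ge\essinf_Q\gamma^-_S$ together with $\gamma^-_S\ge\zeta^-_S$ $Q$-almost surely; substituting $\essinf_Q\gamma^-_S=\gamma^-_S$ into the former gives $\delta^-_S\ge\gamma^-_S$ $Q$-almost surely, and chaining the two bounds produces (\ref{SMBMI2}).

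Since each step is a one-line substitution, there is no real obstacle; the only point worth stating with care is that the two one-sided bounds of Theorem \ref{theoKolmogorov} — $\delta^+_S\ge\gamma^+_S$ on the one side and $\esssup_Q\gamma^+_S\ge\zeta^+_S$ on the other — become a genuine sandwich for $\gamma^+_S$ itself only after the ergodicity-driven collapse of $\gamma^\pm_S$ onto its essential extremes, which is precisely what Theorem \ref{theoKolmogorovErgodic} delivers; the same remark applies to the lower exponents.
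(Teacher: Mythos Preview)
Your proposal is correct and matches the paper's own reasoning exactly: the corollary in the paper is stated without a separate proof, since it follows immediately by combining Theorem~\ref{theoKolmogorovErgodic} (almost-sure constancy of $\gamma^\pm_S$) with the bounds of Theorem~\ref{theoKolmogorov}, precisely via the substitution $\gamma^+_S=\esssup_Q\gamma^+_S$ and $\gamma^-_S=\essinf_Q\gamma^-_S$ that you spell out.
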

It is remarkable that inequalities (\ref{SMBMI1}) and (\ref{SMBMI2})
are demonstrated without invoking the ergodic theorem. 

Let us observe one more simple fact.  Namely, for a finite alphabet
$\mathbb{X}$, the bound for the random Hilberg exponents given by
Theorem \ref{theoKolmogorov} can be slightly strengthened since
$\esssup_Q \gamma^+_S\ge \sred_Q \gamma^+_S$ and $\sred_Q
\gamma^-_S\ge \essinf_Q \gamma^-_S$.
\begin{theorem}
\label{theoKolmogorovFinite}
Consider code (\ref{KolmogorovCode}) and an arbitrary stationary
measure $Q$. Then:
\begin{enumerate}
\item $\sred_Q \gamma^+_S \ge \zeta^+_S$ if the alphabet $\mathbb{X}$
  is finite.
\item $\delta^-_S\ge \sred_Q \gamma^-_S$.
\end{enumerate}
\end{theorem}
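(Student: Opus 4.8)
The plan is to derive both bounds from Theorem~\ref{theoKolmogorov} by upgrading the essential supremum/infimum appearing there to an honest expectation, by means of Fatou-type arguments. For part~(ii), which needs no restriction on the alphabet, I would start from the representation $\gamma^-_S=\liminf_{n\to\infty}\log^+ I^S(n)/\log n$, whose integrand is nonnegative, so ordinary Fatou's lemma gives $\sred_Q\gamma^-_S\le\liminf_{n\to\infty}\sred_Q\kwad{\log^+ I^S(n)/\log n}$. It then suffices to prove $\sred_Q\log^+ I^S(n)\ple\log^+\sred_Q I^S(n)$: writing $\log^+ x=\log(\max\klam{x,0}+1)$ and using concavity of $t\mapsto\log(t+1)$ on $[0,\infty)$, Jensen's inequality yields $\sred_Q\log^+ I^S(n)\le\log\okra{\sred_Q\max\klam{I^S(n),0}+1}$; and since the algorithmic mutual information obeys $I^S(n)\pge 0$, i.e.\ $I^S(n)\ge -C$ for a constant $C\ge 0$, we get $\sred_Q\max\klam{I^S(n),0}\le\sred_Q I^S(n)+C$, after which the $\ple$ bound follows by the usual absorption of the additive constant into $\log^+$. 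Dividing by $\log n$ and taking $\liminf$ gives $\liminf_n\sred_Q\kwad{\log^+ I^S(n)/\log n}\le\delta^-_S$, hence $\delta^-_S\ge\sred_Q\gamma^-_S$.

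For part~(i) the finiteness of $\mathbb{X}$ enters exactly once, to make the reverse Fatou lemma applicable. Since the Kolmogorov code is dominated by the Lempel-Ziv code, $K(x_1^n|F)\le cn+c'$, so $I^S(n)+B\le Cn$ and therefore $\log^+(I^S(n)+B)/\log n$ is bounded by a fixed constant for all $n\ge 2$. As $\limsup_n\log^+(I^S(n)+B)/\log n=\gamma^+_S$ almost surely (the $B$ being immaterial by monotonicity and subadditivity of $\log^+$), the reverse Fatou lemma gives $\sred_Q\gamma^+_S\ge\limsup_{n\to\infty}\sred_Q\kwad{\log^+(I^S(n)+B)/\log n}$. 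To bound the right-hand side below by $\zeta^+_S$ I would chain three elementary steps: $\log^+(I^S(n)+B)\ge\log(I^S(n)+B)$ since $I^S(n)+B\ge 1$; Jensen's inequality in the form $\sred_Q\log(I^S(n)+B)\ge\log\kwad{\sred_Q(I^S(n)+B)^{-1}}^{-1}$; and $\log x\ge\log^+ x-1$ for $x\ge 1$, applied with $x=\kwad{\sred_Q(I^S(n)+B)^{-1}}^{-1}\ge 1$. Together these give $\sred_Q\kwad{\log^+(I^S(n)+B)/\log n}\ge\log^+\kwad{\sred_Q(I^S(n)+B)^{-1}}^{-1}/\log n-1/\log n$, and taking $\limsup$ over $n$ yields $\ge\zeta^+_S$, so $\sred_Q\gamma^+_S\ge\zeta^+_S$.

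The only genuine step is pushing the expectation through the $\limsup$ in part~(i); this is precisely where finiteness of $\mathbb{X}$ is used, since it makes the relevant sequence deterministically bounded and hence the reverse Fatou lemma available --- which is exactly the point that separates $\sred_Q\gamma^+_S\ge\zeta^+_S$ from the weaker bound $\esssup_Q\gamma^+_S\ge\zeta^+_S$ of Theorem~\ref{theoKolmogorov}. Everything else --- the $O(1)$ gaps between $\log$ and $\log^+$, the pointwise bounds $-C\le I^S(n)\le Cn$, and the two applications of Jensen's inequality --- is routine, so I expect no serious obstacle.
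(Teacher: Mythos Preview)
Your proposal is correct and follows essentially the same strategy as the paper: Fatou (resp.\ reverse Fatou, enabled by the finite-alphabet bound $I^S(n)\le Cn$) to swap expectation with $\liminf$/$\limsup$, followed by Jensen's inequality to pass from $\sred_Q\log(\cdot)$ to $\log\sred_Q(\cdot)$ or $-\log\sred_Q(\cdot)^{-1}$. The paper streamlines the bookkeeping by working throughout with $\log(I^S(n)+B)$ (which is nonnegative since $I^S(n)+B\ge 1$) in place of $\log^+ I^S(n)$, thereby avoiding your $\log^+/\log$ conversions and the $\max\klam{I^S(n),0}$ detour; but these are cosmetic differences, not substantive ones.
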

\begin{proof}
  \begin{enumerate}
  \item Function $-\log$ is convex. Hence we can use the Fatou lemma
    and the Jensen inequality,
    \begin{align*}
      \sred_Q \gamma^+_S
      &=
      \sred_Q \limsup_{n\rightarrow\infty} \frac{\log (I^S(n)+B)}{\log n}
      \\
      &\ge
      \limsup_{n\rightarrow\infty} \sred_Q \frac{\log (I^S(n)+B)}{\log n}
      \\
      &\ge
      \limsup_{n\rightarrow\infty} \sred_Q \kwad{-\frac{\log (I^S(n)+B)^{-1}}{\log n}}
      \\
      &\ge
      \limsup_{n\rightarrow\infty} \kwad{-\frac{\log \sred_Q (I^S(n)+B)^{-1}}{\log n}}
      =
      \zeta^+_S
      ,
    \end{align*}
    since the functions under the limits are bounded above.
  \item Reasoning as above,
    \begin{align*}
      \sred_Q \gamma^-_S
      &=
      \sred_Q \liminf_{n\rightarrow\infty} \frac{\log (I^S(n)+B)}{\log n}
      \\
      &\le
      \liminf_{n\rightarrow\infty} \sred_Q \frac{\log (I^S(n)+B)}{\log n}
      \\
      &\le
      \liminf_{n\rightarrow\infty} \frac{\log \sred_Q (I^S(n)+B)}{\log n}
      =
      \delta^-_S
      ,
    \end{align*}
    since the functions under the limits are nonnegative.
  \end{enumerate}
\end{proof}

\section{The underlying measure}
\label{secMeasure}

In this section we will prove Theorem
\ref{theoMeasureErgodicAnalogue}, which provides a bound for the
random Hilberg exponents of the underlying measure $Q$ in terms of the
measure's expected Hilberg exponents. For this goal, we will use the
results of the previous section. Our technique rests on a few
observations. The first observation is that four out of six Hilberg
exponents for the Kolmogorov code are equal to the Hilberg exponents
for the underlying measure $Q$ if we use a special conditional
Kolmogorov code. In this code, the definition of measure $Q$ is fed to
the Turing machine on an additional infinite tape, i.e., $F=Q$. By the
Shannon-Fano coding and the Barron inequality, such a Kolmogorov code
is equal to the measure $Q$ in a sufficiently good approximation.
\begin{theorem}
\label{theoMeasure}
Consider code (\ref{KolmogorovCode})
with $F=Q$, where $Q$ is an arbitrary stationary measure. Then:
\begin{enumerate}
\item $\delta^-_S=\delta^-_Q$ and $\delta^+_S=\delta^+_Q$.
\item $\gamma^-_S=\gamma^-_Q$ and $\gamma^+_S=\gamma^+_Q$
  $Q$-almost surely.
\end{enumerate}
\end{theorem}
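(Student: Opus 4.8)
The plan is to compare, term by term, the conditional Kolmogorov code $S$ with $F=Q$ against the measure $Q$ itself. I want to show that $I^S(n)$ and $I^Q(n)$ differ by only $O(\log n)$ both $Q$-almost surely and after applying $\sred_Q$, and then to observe that an additive discrepancy of this order is invisible to each of the four Hilberg exponents, because $\log^+$ is subadditive (inequality (\ref{LogSubadditivity})) and $\log^+(O(\log n))/\log n\to 0$.

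Concretely, $-\log S(w)=K(w|Q)$ for every finite block $w$, so
\begin{align*}
  I^S(n)=K(X_{-n+1}^0|Q)+K(X_1^n|Q)-K(X_{-n+1}^n|Q).
\end{align*}
The core estimate is a two-sided bound on $K(w|Q)+\log Q(w)$ for $w$ of length $\ell$. For the upper bound I would use Shannon--Fano coding of the distribution $Q(\cdot)$ on $\mathbb{X}^\ell$ together with a self-delimiting description of $\ell$, which gives $K(w|Q)\le-\log Q(w)+O(\log\ell)$ with constants independent of $w$. For the lower bound I would invoke Barron's inequality \cite[Theorem 3.1]{Barron85b}: since $\sum_w 2^{-K(w|Q)}\le 1$, a Markov-type argument yields $Q\bigl(K(X_1^n|Q)<-\log Q(X_1^n)-2\log n\bigr)\le n^{-2}$, and by stationarity the same bound holds for the length-$n$ block $X_{-n+1}^0$ and the length-$2n$ block $X_{-n+1}^n$. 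These probabilities are summable, so the Borel--Cantelli lemma gives, $Q$-almost surely and for all large $n$, $\abs{K(w|Q)+\log Q(w)}\le O(\log n)$ simultaneously for the three blocks appearing above; summing the three estimates gives $\abs{I^S(n)-I^Q(n)}\le O(\log n)$ $Q$-almost surely.

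For the expected exponents I would argue directly rather than through the almost-sure bound. Shannon--Fano gives $\sred_Q K(X_1^n|Q)\le\sred_Q\kwad{-\log Q(X_1^n)}+O(\log n)$, and the Gibbs inequality applied to the sub-probability $2^{-K(\cdot|Q)}$ on $\mathbb{X}^n$ gives the matching $\sred_Q K(X_1^n|Q)\ge\sred_Q\kwad{-\log Q(X_1^n)}$; doing this for the three blocks yields $\abs{\sred_Q I^S(n)-\sred_Q I^Q(n)}\le O(\log n)$. It then remains to transfer the two bounds to the exponents, using the elementary fact that if $\abs{a(n)-b(n)}\le g(n)$ with $\limsup_n\log^+ g(n)/\log n=0$, then by (\ref{LogSubadditivity}) one has $\log^+ a(n)\le\log^+ b(n)+\log^+ g(n)$ and symmetrically, so $a$ and $b$ share all upper and lower Hilberg-type exponents. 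Applying this with $(a(n),b(n))=(I^S(n),I^Q(n))$ on the almost-sure event gives $\gamma^\pm_S=\gamma^\pm_Q$ $Q$-almost surely (item (ii)), and with $(a(n),b(n))=(\sred_Q I^S(n),\sred_Q I^Q(n))$ gives $\delta^\pm_S=\delta^\pm_Q$ (item (i)).

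I expect the only real work to be bookkeeping: verifying that the Shannon--Fano overhead is genuinely uniform in the block length (hence the harmless $O(\log\ell)$ term for describing $\ell$), and correctly invoking stationarity so that Barron's inequality applied to the shifted block $X_{-n+1}^0$ and the doubled block $X_{-n+1}^n$ still produces a summable series and hence an almost-sure conclusion. There is no substantive obstacle; the single idea driving the proof is that an $O(\log n)$ --- and a fortiori $O(1)$ --- perturbation of $I^P(n)$ leaves every Hilberg exponent unchanged, which is exactly what the subadditivity of $\log^+$ encodes.
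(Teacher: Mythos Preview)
Your proposal is correct and follows essentially the same approach as the paper: Shannon--Fano coding for the upper bound on $K(\cdot|Q)$, the source-coding (Gibbs) inequality for the expected lower bound, and Barron's inequality for the almost-sure lower bound, all yielding $\abs{I^S(n)-I^Q(n)}=O(\log n)$ and hence equal Hilberg exponents via subadditivity of $\log^+$. The only cosmetic difference is that the paper cites the almost-sure limit form of Barron's theorem directly, whereas you unpack it as the quantitative probability bound plus Borel--Cantelli; the two are equivalent.
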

\begin{proof}
\begin{enumerate}
\item 
  The Shannon-Fano coding gives
  \begin{align}
    \label{ShannonFano}
    -\log S(x_1^n)=K(x_1^n|Q)\le -\log Q(x_1^n)+2\log n +C
  \end{align}
  for a constant $C>0$ \cite{GacsTrompVitanyi01}.  Hence from the
  source coding inequality $$\sred_Q H^S(n)\ge \sred_Q H^Q(n),$$ we
  obtain 
  \begin{align}
    \label{ExpectedBound}
    \abs{\sred_Q I^S(n)-\sred_Q I^Q(n)}\le 4\log n +2C.
  \end{align}
  Thus by inequality~(\ref{LogSubadditivity}), $\delta^-_S=\delta^-_Q$ and
  $\delta^+_S=\delta^+_Q$.
\item Observe that $Q$-almost surely we have the following Barron
  inequalities, viz.\ \cite[Theorem 3.1]{Barron85b},
  \begin{align*}
    \lim_{n\rightarrow\infty} 
    \kwad{-\log S(X_{-n+1}^{n})+\log Q(X_{-n+1}^{n})}&=\infty
    ,
    \\
    \lim_{n\rightarrow\infty} 
    \kwad{-\log S(X_1^n)+\log Q(X_1^n)}&=\infty
    ,
    \\
    \lim_{n\rightarrow\infty} 
    \kwad{-\log S(X_{-n+1}^{2n})+\log Q(X_{-n+1}^{n})}&=\infty    
    .
  \end{align*}
  Combining these facts with the Shannon-Fano coding
  (\ref{ShannonFano}) yields
  $$\abs{I^S(n)-I^Q(n)}\le
  4\log n +2C.$$ for sufficiently large $n$, $Q$-almost surely.  Thus
  by inequality~(\ref{LogSubadditivity}), $\gamma^-_S=\gamma^-_Q$ and
  $\gamma^+_S=\gamma^+_Q$ holds on a set of full measure.
\end{enumerate}
\end{proof}

Theorem \ref{theoMeasure} implies two more specific corollaries of an
independent interest.  The first result states that Hilberg exponents
for a computable measure $Q$ are equal to Hilberg exponents for
unconditional prefix-free Kolmogorov complexity.
\begin{corollary}
  \label{theoMeasureComputable}
  If measure $Q$ is computable then for code $P(x_1^n)=2^{-K(x_1^n)}$,
  where $K(x_1^n)$ is unconditional prefix-free Kolmogorov complexity
  of $x_1^n$, we have
$$I^P(n)\peq I^S(n),$$ where we use code
(\ref{KolmogorovCode}) with $F=Q$ again. This implies
$\gamma^-_Q=\gamma^-_S=\gamma^-_P$ and
$\gamma^-_Q=\gamma^+_S=\gamma^+_P$ $Q$-almost surely, whereas
$\delta^-_Q=\delta^-_S=\delta^-_P$ and
$\delta^+_Q=\delta^+_S=\delta^+_P$.
\end{corollary}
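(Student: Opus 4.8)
The plan is to establish the pointwise estimate $I^P(n)\peq I^S(n)$ and then transfer the equalities of Hilberg exponents from $S$ to $P$, reading off the equalities with the exponents of $Q$ from Theorem~\ref{theoMeasure}. First I would compare the two Kolmogorov complexities $K(x_1^n)$ (unconditional) and $K(x_1^n|Q)$ (conditional on the definition of the computable measure $Q$). Since $Q$ is computable, its definition can be produced by a fixed finite program, so conditioning on $Q$ saves only $O(1)$ bits: $K(x_1^n)\ple K(x_1^n|Q)$. Conversely, trivially $K(x_1^n|Q)\ple K(x_1^n)$ because an extra tape can always be ignored, with an additional $O(1)$ overhead for the machine simulation. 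Hence $K(x_1^n)\peq K(x_1^n|Q)$, i.e.\ $-\log P(x_1^n)\peq -\log S(x_1^n)$ for all finite strings $x_1^n$.

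Next I would substitute this into the definition of $I^P(n)$ and $I^S(n)$. Each of these is an alternating sum of three complexity terms evaluated at the random blocks $X_{-n+1}^0$, $X_1^n$, $X_{-n+1}^n$, so the difference $I^P(n)-I^S(n)$ is a sum of three differences each bounded by a constant in absolute value; thus $\abs{I^P(n)-I^S(n)}\le C'$ for a fixed $C'>0$ and all $n$. In particular $I^P(n)\peq I^S(n)$ holds surely (not merely almost surely), and a fortiori $\abs{\sred_Q I^P(n)-\sred_Q I^S(n)}\le C'$.

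Finally I would invoke subadditivity of $\log^+$, inequality~(\ref{LogSubadditivity}): from $I^P(n)\le I^S(n)+C'$ we get $\log^+ I^P(n)\le \log^+ I^S(n)+\log^+ C'$, and dividing by $\log n$ and passing to $\limsup$ resp.\ $\liminf$ (the constant term vanishes) gives $\gamma^+_P\le\gamma^+_S$ and $\gamma^-_P\le\gamma^-_S$; the symmetric bound $I^S(n)\le I^P(n)+C'$ gives the reverse inequalities, so $\gamma^\pm_P=\gamma^\pm_S$ (everywhere, hence $Q$-almost surely). The identical argument applied to $\sred_Q I^P(n)$ and $\sred_Q I^S(n)$ yields $\delta^\pm_P=\delta^\pm_S$. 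Combining with Theorem~\ref{theoMeasure}, which gives $\gamma^\pm_S=\gamma^\pm_Q$ $Q$-almost surely and $\delta^\pm_S=\delta^\pm_Q$, completes the chain of equalities stated in the corollary.

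There is no serious obstacle here; the only point requiring mild care is the very first step, namely that computability of $Q$ is exactly what makes its definition a finite object that can be hard-wired into a program, so that $K(x_1^n)$ and $K(x_1^n|Q)$ differ by $O(1)$ rather than by something growing with $n$. Everything downstream is the same $\log^+$-subadditivity bookkeeping already used repeatedly in Section~\ref{secKolmogorov} and in the proof of Theorem~\ref{theoMeasure}. One should also note that the typographical ``$\gamma^-_Q=\gamma^+_S=\gamma^+_P$'' in the statement is evidently a misprint for ``$\gamma^+_Q=\gamma^+_S=\gamma^+_P$'', which the proof delivers automatically.
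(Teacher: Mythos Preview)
Your proposal is correct and follows exactly the reasoning the paper intends: the corollary is stated without a separate proof, with the key step $I^P(n)\peq I^S(n)$ (valid because computability of $Q$ makes $K(x_1^n)\peq K(x_1^n|Q)$) embedded in the statement itself, and the Hilberg-exponent equalities then follow from the $\log^+$-subadditivity bookkeeping together with Theorem~\ref{theoMeasure}. Your observation about the misprint $\gamma^-_Q=\gamma^+_S=\gamma^+_P$ is also correct.
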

The second result concerns a nonergodic measure with a given ergodic
decomposition. It says that Hilberg exponents for this nonergodic
measure are almost surely constant on almost all ergodic components of
the measure.
\begin{corollary}
  \label{theoMeasureNonergodic}
  Suppose that measure $Q$ has the random ergodic measure $E$ given by
  (\ref{RandomErgodic}). We have $Q=\sred_Q E$, so by the properties
  of integral, any set of full $Q$-measure has full $E$-measure
  $Q$-almost surely.  This implies that for code
  (\ref{KolmogorovCode}) with $F=Q$, we have $\gamma^-_S=\gamma^-_Q$
  and $\gamma^+_S=\gamma^+_Q$ $E$-almost surely for $Q$-almost all
  values of measure $E$. By Theorem~\ref{theoKolmogorovErgodic}, in
  case of a finite alphabet, this means that $\gamma^-_Q$ and
  $\gamma^+_Q$ are $E$-almost surely constant for those values of
  measure $E$.
\end{corollary}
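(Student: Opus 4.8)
The statement already contains the skeleton of its own argument, so the plan is to make its two displayed implications rigorous and then graft on the two results proved above: Theorem~\ref{theoMeasure}(ii), which says that for the code~(\ref{KolmogorovCode}) with $F=Q$ the equalities $\gamma^\pm_S=\gamma^\pm_Q$ hold on a set of full $Q$-measure, and Theorem~\ref{theoKolmogorovErgodic}, which says that for \emph{any} fixed object $F$ and \emph{any} ergodic measure over a finite alphabet the exponents $\gamma^\pm_S$ for the code~(\ref{KolmogorovCode}) are almost surely constant.

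First I would record the ergodic decomposition underlying~(\ref{RandomErgodic}): conditioning on the shift-invariant algebra $\mathcal{I}$ yields $Q=\sred_Q E$, and, by the standard theory of the ergodic decomposition \cite{Kallenberg97,Debowski11b}, the value $E(\omega)$ is a stationary ergodic measure for $Q$-almost every $\omega$. Next I would establish the descent of full-measure sets: if $A\in\mathcal{J}$ has $Q(A)=1$, then $\sred_Q\kwad{E(A^c)}=Q(A^c)=0$, and since $E(A^c)\ge 0$ this forces $E(A^c)=0$, i.e.\ $E(A)=1$, for $Q$-almost all values of $E$. Applying this to the full-$Q$-measure event $A$ on which Theorem~\ref{theoMeasure}(ii) gives $\gamma^\pm_S=\gamma^\pm_Q$ (with $F=Q$), I obtain that $\gamma^-_S=\gamma^-_Q$ and $\gamma^+_S=\gamma^+_Q$ hold $E$-almost surely for $Q$-almost all values of $E$; this is the first assertion of the corollary, and it needs no assumption on the alphabet.

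For the finite-alphabet refinement I would fix $\omega$ outside the $Q$-null set on which $E(\omega)$ fails to be ergodic, and apply Theorem~\ref{theoKolmogorovErgodic} to the fixed object $F=Q$ and the fixed ergodic measure $E(\omega)$: this gives that $\gamma^-_S$ and $\gamma^+_S$ are $E(\omega)$-almost surely equal to constants (possibly depending on $\omega$). Intersecting with the full-$E(\omega)$-measure event from the previous step, $\gamma^-_Q$ and $\gamma^+_Q$ inherit the same $E(\omega)$-almost sure constancy, for $Q$-almost all values of $E$, which is the conclusion.

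The argument is essentially bookkeeping; the one point deserving care is that Theorem~\ref{theoKolmogorovErgodic} must be invoked with $F$ held equal to $Q$ while the ergodic source ranges over the random values $E(\omega)$. What makes this legitimate is that the proof of that theorem uses only the approximate shift-invariance of $K(\cdot|F)$ for a fixed $F$ --- no compatibility between the conditioning object $F$ and the source is required --- so the deterministic statement may simply be applied pointwise for each $\omega$ with $E(\omega)$ ergodic. A secondary thing to keep straight is that ``$\gamma^\pm_Q$ constant $E$-almost surely'' permits the constant to vary with $\omega$, and that, as the Remark after Theorem~\ref{theoKolmogorovErgodic} notes, it genuinely will vary across ergodic components in general.
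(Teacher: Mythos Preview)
Your proposal is correct and follows exactly the argument the paper embeds in the statement of the corollary itself: the descent of full-$Q$-measure sets to full-$E$-measure sets via $Q=\sred_Q E$, the application of Theorem~\ref{theoMeasure}(ii) to transfer $\gamma^\pm_S=\gamma^\pm_Q$ from $Q$ to almost every $E$, and the pointwise invocation of Theorem~\ref{theoKolmogorovErgodic} with $F=Q$ fixed and the ergodic source taken to be $E(\omega)$. Your explicit remark that Theorem~\ref{theoKolmogorovErgodic} requires no compatibility between $F$ and the source is a useful clarification of a point the paper leaves implicit.
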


What lacks for the proof of Theorem \ref{theoMeasureErgodicAnalogue}
is a computable lower bound for the inverse Hilberg exponents, defined
in the previous section for the Kolmogorov code $S$.  For an arbitrary
code $P$ with $\delta^-_P>0$, let us introduce parameter $\epsilon_P$
given by formula (\ref{Gap}).  First, we will show that the difference
between the expected and the inverse Hilberg exponents
$\delta^\pm_S-\zeta^\pm_S$ is bounded by parameter $\epsilon_S$ and
then we will show that  $\epsilon_S=\epsilon_Q$ for $F=Q$.
\begin{theorem}
  \label{theoEpsilonOne}
  Consider code (\ref{KolmogorovCode}) and an arbitrary stationary
  measure $Q$.  If $\delta^-_S>0$ then $\zeta^+_S\ge
  \delta^+_S-\epsilon_S$ and $\zeta^-_S\ge \delta^-_S-\epsilon_S$.
 \end{theorem}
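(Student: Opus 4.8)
The plan is to reduce everything to a single ``reverse Jensen'' estimate comparing the inverse-type average $\kwad{\sred_Q (I^S(n)+B)^{-1}}^{-1}$ (which defines $\zeta^\pm_S$) with the ordinary average $\sred_Q I^S(n)$ (which defines $\delta^\pm_S$), the discrepancy being controlled by the variance-to-mean ratio $\var_Q I^S(n)/\sred_Q I^S(n)$ entering $\epsilon_S$. Abbreviate $M_n=\sred_Q I^S(n)$, $V_n=\var_Q I^S(n)$ and $Z_n=I^S(n)+B$, so that $Z_n\ge 1$ and $\mu_n:=\sred_Q Z_n=M_n+B$. Since $\delta^-_S>0$ we have $M_n\to\infty$, hence $\mu_n/M_n\to 1$ and $M_n\ge 1$ for $n$ large. (We may assume $M_n$ and $V_n$ are finite, the statement being trivial or undefined otherwise; in particular for a finite alphabet this is automatic.)

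The key step -- and the only one that is not routine -- is the bound $\sred_Q Z_n^{-1}\le \tfrac{2}{\mu_n}+\tfrac{4V_n}{\mu_n^2}$ for all large $n$. I would get it by splitting $\sred_Q Z_n^{-1}$ over the events $\klam{Z_n>\mu_n/2}$ and $\klam{Z_n\le\mu_n/2}$. On the first, $Z_n^{-1}<2/\mu_n$, which yields the $2/\mu_n$ term. On the second, the crucial input is the a priori lower bound $Z_n\ge 1$ (the very reason $B$ was introduced, together with $I^S(n)\pge 0$): there $Z_n^{-1}\le 1$, so the contribution is at most $Q(Z_n\le\mu_n/2)\le Q(\abs{Z_n-\mu_n}\ge\mu_n/2)\le 4V_n/\mu_n^2$ by the Chebyshev inequality. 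Inverting this bound, and using $\mu_n\ge M_n/2$ together with an elementary estimate $2+ct\le c'(1+t)$ for $t\ge 0$, gives $\kwad{\sred_Q Z_n^{-1}}^{-1}\ge c\, M_n/(1+V_n/M_n)$ for a universal constant $c>0$ and all large $n$.

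The rest is $\log^+$ bookkeeping of the kind already used repeatedly in this section. Since $Z_n\ge 1$ we have $\kwad{\sred_Q Z_n^{-1}}^{-1}\ge 1$, and $M_n\ge 1$ eventually, so taking logarithms of the last bound and using $\log(1+V_n/M_n)=\log^+(V_n/M_n)$, $\log M_n\ge\log^+ M_n-\log 2$, and the subadditivity~(\ref{LogSubadditivity}) gives
\[
\log^+\kwad{\sred_Q Z_n^{-1}}^{-1}\ \ge\ \log^+ M_n-\log^+\!\okra{V_n/M_n}-C
\]
for a constant $C$ and all large $n$. Dividing by $\log n$ and letting $n\to\infty$ finishes the argument: taking $\limsup$ and using $\limsup(a_n-b_n)\ge\limsup a_n-\limsup b_n$ yields $\zeta^+_S\ge\delta^+_S-\epsilon_S$, while taking $\liminf$ and using $\liminf(a_n-b_n)\ge\liminf a_n-\limsup b_n$ yields $\zeta^-_S\ge\delta^-_S-\epsilon_S$.

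I expect essentially all the work to be concentrated in the reverse-Jensen step: recognizing that the correct truncation threshold is a fixed fraction of the mean, and that the crude lower bound $I^S(n)+B\ge 1$ is precisely what tames the small-$Z_n$ part. Everything downstream -- the inversion, the elementary constants, and the passage to $\limsup/\liminf$ through $\log^+$ -- is standard and carries no real obstacle.
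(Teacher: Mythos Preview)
Your proposal is correct and follows essentially the same route as the paper: split $\sred_Q(I^S(n)+B)^{-1}$ according to whether $I^S(n)+B$ exceeds a fixed fraction of its mean, bound the large part trivially and the small part by $1$ times a Chebyshev tail (using $I^S(n)+B\ge 1$), then invert and push through $\log^+$ to the $\limsup/\liminf$. The only cosmetic difference is that the paper keeps a free truncation parameter $\alpha\in(0,1)$ where you fix $\alpha=1/2$; this changes nothing in the outcome.
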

 \begin{proof}
   Let $\alpha\in (0,1)$.  By $I^S(n)+B\ge 1$ and by Markov inequality
   we obtain
   \begin{align*}
     \sred_Q (I^S(n)+B)^{-1} 
     &\le 
     Q\okra{\abs{I^S(n)-\sred_Q{I^S(n)}}\ge\alpha(\sred_Q{I^S(n)}+B)}
     \\
     &\phantom{=}
     +
     \frac{1}{(1-\alpha)(\sred_Q{I^S(n)}+B)}
     \\
     &\le
     \frac{\var_Q I^S(n)}{\alpha^2(\sred_Q I^S(n)+B)^2}+
     \frac{1}{(1-\alpha)(\sred_Q{I^S(n)}+B)}
     .
   \end{align*}
   Hence
   \begin{align*}
     \kwad{\sred_Q (I^S(n)+B)^{-1}}^{-1}
     \ge 
     (\sred_Q{I^S(n)}+B)
     \okra{\frac{\var_Q I^S(n)}{\alpha^2(\sred_Q
         I^S(n)+B)}+\frac{1}{(1-\alpha)}}^{-1}
     ,
   \end{align*}
   which implies the claim by $\log(x/y)=\log x-\log y$,
   $\delta^-_S>0$, and inequality~(\ref{LogSubadditivity}).
 \end{proof}

 Subsequently, we will prove that parameter $\epsilon_S$ for the
 conditional Kolmogorov code with $F=Q$ is equal to parameter
 $\epsilon_Q$ for the underlying measure.
\begin{theorem}
\label{theoEpsilonTwo}
Consider code (\ref{KolmogorovCode})
with $F=Q$, where $Q$ is an arbitrary stationary measure. 
If $\delta^-_Q>0$ then $\epsilon_S=\epsilon_Q$.
\end{theorem}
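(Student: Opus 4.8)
The plan is to lift the first-order comparisons of Theorem~\ref{theoMeasure} --- equality of expectations and proximity of $I^S(n)$ and $I^Q(n)$ up to $O(\log n)$ --- to a comparison of the \emph{variances}, and then to read $\epsilon_S=\epsilon_Q$ off the definition (\ref{Gap}). From Theorem~\ref{theoMeasure} I already have $\delta^-_S=\delta^-_Q>0$ and $\abs{\sred_Q I^S(n)-\sred_Q I^Q(n)}\le 4\log n+2C$. Since $\sred_Q I^Q(n)=I(X_{-n+1}^0;X_1^n)\ge 0$ and $\delta^-_Q>0$, for all large $n$ one has $\sred_Q I^Q(n)+1\ge n^{\delta^-_Q/2}$, so $\sred_Q I^Q(n)$ dwarfs $\log n$ and $\sred_Q I^S(n)/\sred_Q I^Q(n)\to 1$. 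Hence everything reduces to relating $\var_Q I^S(n)$ and $\var_Q I^Q(n)$.

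The key step is to bound the variance of the discrepancy $D_n=I^S(n)-I^Q(n)$. Writing $W(x_a^b)=-\log S(x_a^b)+\log Q(x_a^b)=K(x_a^b|Q)+\log Q(x_a^b)$ for the code excess over a block, one has $D_n=W(X_{-n+1}^0)+W(X_1^n)-W(X_{-n+1}^n)$. The Shannon--Fano bound (\ref{ShannonFano}) gives the deterministic estimate $W(x_1^m)\le 2\log m+C$, while Markov's inequality applied to $S(X_1^m)/Q(X_1^m)$, whose $Q$-expectation is $\sum_{x_1^m}S(x_1^m)\le 1$, gives the lower tail $Q\okra{W(X_1^m)\le -t}\le 2^{-t}$ for all $t>0$. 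These two facts yield $\sred_Q W(X_1^m)^2\le(2\log m+C)^2+O(1)=O((\log m)^2)$, and hence, by stationarity of $Q$ and $(a+b-c)^2\le 3(a^2+b^2+c^2)$,
\[
  \var_Q D_n\le \sred_Q D_n^2\le 3\okra{2\,\sred_Q W(X_1^n)^2+\sred_Q W(X_1^{2n})^2}=O\okra{(\log n)^2}.
\]
In particular $\var_Q D_n<\infty$, so $\var_Q I^S(n)$ is finite iff $\var_Q I^Q(n)$ is; if both are infinite the claim is vacuous, so I assume finiteness henceforth.

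From $I^S(n)=I^Q(n)+D_n$ and the triangle inequality in $L^2(Q)$, $\abs{\sqrt{\var_Q I^S(n)}-\sqrt{\var_Q I^Q(n)}}\le\sqrt{\var_Q D_n}\le c_0\log n$ for a constant $c_0$. Consequently, for all large $n$ either (a) $\var_Q I^Q(n)\le(2c_0\log n)^2$, whence also $\var_Q I^S(n)\le(3c_0\log n)^2$ and both variances are $O((\log n)^2)$; or (b) $\sqrt{\var_Q I^Q(n)}>2c_0\log n$, whence $\tfrac14\var_Q I^Q(n)\le\var_Q I^S(n)\le\tfrac94\var_Q I^Q(n)$. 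In case (a), since $\sred_Q I^Q(n)\ge n^{\delta^-_Q/2}-1$ and $\sred_Q I^S(n)\sim\sred_Q I^Q(n)$, both ratios $\var_Q I^P(n)/\sred_Q I^P(n)$ with $P\in\klam{S,Q}$ are $O((\log n)^2)\,n^{-\delta^-_Q/2}\to 0$, so each quantity $\log^+[\var_Q I^P(n)/\sred_Q I^P(n)]/\log n$ tends to $0$ along such $n$. In case (b), $\var_Q I^S(n)\asymp\var_Q I^Q(n)$ together with $\sred_Q I^S(n)\sim\sred_Q I^Q(n)$ shows $\var_Q I^S(n)/\sred_Q I^S(n)$ and $\var_Q I^Q(n)/\sred_Q I^Q(n)$ differ by a bounded multiplicative factor, so by subadditivity of $\log^+$, cf.\ (\ref{LogSubadditivity}), the two quantities $\log^+[\cdot]/\log n$ differ by $O(1/\log n)\to 0$. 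Thus in every case the two $\log^+$-normalized quantities for $S$ and $Q$ have difference tending to $0$, and taking $\limsup_n$ gives $\epsilon_S=\epsilon_Q$.

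The main obstacle is the variance bound for the code excess $W(X_1^m)$: its \emph{upper} tail is controlled for free by Shannon--Fano, but its \emph{lower} tail --- atypical strings of minuscule $Q$-probability on which a good code loses badly --- must be suppressed by the Barron/Markov exponential bound $Q(W(X_1^m)\le -t)\le 2^{-t}$, which is exactly what forces $\sred_Q W(X_1^m)^2$ to be only of order $(\log m)^2$ and hence $\var_Q D_n$ negligible against both variances at the logarithmic scale. Everything after that is routine manipulation of $\log^+$ and the first-order estimates of Theorem~\ref{theoMeasure}.
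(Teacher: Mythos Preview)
Your proof is correct and follows essentially the same route as the paper's: both reduce the claim to showing $\var_Q(I^S(n)-I^Q(n))=O((\log n)^2)$ via the Shannon--Fano upper bound together with the Barron/Markov lower-tail estimate $Q(W\le -t)\le 2^{-t}$, then use the $L^2$ triangle inequality for standard deviations and the power-law growth of $\sred_Q I^Q(n)$ implied by $\delta^-_Q>0$. The only cosmetic differences are that the paper bounds the tail of $D_n=I^S(n)-I^Q(n)$ directly (obtaining $Q(|D_n|\ge 4\log n+C+m)\le 2^{-m}$) whereas you bound each $\sred_Q W(X_1^m)^2$ and combine via $(a+b-c)^2\le 3(a^2+b^2+c^2)$, and that you spell out the small-variance/large-variance case split which the paper leaves implicit.
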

\begin{proof}
  Since $\delta^-_Q>0$, by inequality (\ref{ExpectedBound}), we obtain
\begin{align*}
  \epsilon_S=\limsup_{n\rightarrow\infty}
  \frac{\log^+ \kwad{\var_Q I^S(n)/\sred_Q I^Q(n)}}{\log n}
  .
\end{align*}
In the following, we have 
\begin{align*}
  \var_Q I^S(n)\in
  &\left[\okra{\sqrt{\var_Q I^Q(n)}-\sqrt{\var_Q (I^S(n)-I^Q(n))}}^2
  \right.,
  \\
  &\left.\okra{\sqrt{\var_Q I^Q(n)}+\sqrt{\var_Q (I^S(n)-I^Q(n))}}^2
  \right]
  .
\end{align*}
Thus, to show $\epsilon_S=\epsilon_Q$ it suffices to prove that
\begin{align*}
  \limsup_{n\rightarrow\infty} 
  \frac{\log^+ \var_Q (I^S(n)-I^Q(n))}{\log n}
  =0
  .
\end{align*}
To demonstrate the latter fact, we will use Shannon-Fano coding
(\ref{ShannonFano}) and a stronger version of Barron's inequality,
viz.\ \cite[Theorem 3.1]{Barron85b}, namely,
\begin{align}
  Q(-\log P(X_1^n)+\log Q(X_1^n)\le -m)\le 2^{-m}
  ,
\end{align}
which holds for an arbitrary code $P$. Hence we obtain
\begin{align*}
  Q(\abs{I^S(n)-I^Q(n)}\ge 4\log n +C +m)\le 2^{-m}
\end{align*}
for a certain constant $C$. Subsequently, this yields
\begin{align*}
  \sred_Q\okra{I^S(n)-I^Q(n)}^2
  &\le
  (4\log n +C)^2+
  \sum_{m=0}^\infty (4\log n +C+m+1)^22^{-m}
  \\
  &\le A(\log n)^2 +B 
\end{align*}
for certain $A,B>0$, which proves the claim.
\end{proof}

Now we may prove Theorem \ref{theoMeasureErgodicAnalogue}.
\begin{proof*}{Theorem \ref{theoMeasureErgodicAnalogue}}
  Apply Corollary \ref{theoKolmogorovErgodicAnalogue} from the
  previous section and Theorems \ref{theoMeasure},
  \ref{theoEpsilonOne}, and \ref{theoEpsilonTwo} from this section.
\end{proof*}
Although Theorem \ref{theoMeasureErgodicAnalogue} does not refer to
Kolmogorov complexity, an open question remains how parameter
$\epsilon_Q$ can be evaluated in nontrivial cases (i.e., for a process
not being a memoryless source).  In Section \ref{secEvaluation}, we
will exhibit two processes for which $\delta^+_Q=\gamma^+_Q$ and
$\delta^-_Q=\gamma^-_Q$. Our evaluation of Hilberg exponents for these
processes is direct, without bounding the parameter $\epsilon_Q$.  We
are not aware of any process for which $\delta^+_Q>\gamma^+_Q$ or
$\delta^-_Q>\gamma^-_Q$.


\section{Exponents for particular sources}
\label{secEvaluation}

Hilberg exponents can be effectively evaluated in certain cases.  In
this section we shall compute exponents $\gamma^\pm_Q$ and
$\delta^\pm_Q$ related to the underlying measure $Q$ of the process.
For IID processes, these Hilberg exponents are trivially equal zero
since there is no dependence in the process. Equalities
$\delta^\pm_Q=0$ hold also for Markov processes over a finite alphabet
and hidden Markov processes with a finite number of hidden states,
since the expected mutual information is bounded for measures of those
processes by the data-processing inequality. Hence, in that case, we
also have $\gamma^\pm_Q=0$ by Theorem
\ref{theoMeasureErgodicAnalogue}.

Some simple example of a process with unbounded mutual information is
the mixture of Bernoulli processes over the alphabet
$\mathbb{X}=\klam{0,1}$, which we will call the mixture Bernoulli
process:
\begin{align}
  \label{BayesianBernoulli}
  Q(x_1^n)= 
  \int_0^1 \theta^{\sum_{i=1}^n x_i}(1-\theta)^{n-\sum_{i=1}^n x_i}
  d\theta
  =
  \frac{1}{n+1}
  \binom{n}{\sum_{i=1}^n x_i}^{-1}
  .
\end{align}
Although $X_i$ are dependent for this measure $Q$, we will show that
the related Hilberg exponents also vanish.  

It should be noted that the mixture Bernoulli process is a
conditionally IID $1$-parameter source.  The pointwise mutual
information $I^Q(n)$ for conditionally IID sources is equal to a
difference of redundancies. Moreover, as shown in
\cite{Atteson99,BarronRissanenYu98,LiYu00}, for $k$-parameter
processes, the redundancy is proportional to $k\log n$.  Formally,
this suffices to prove that the Hilberg exponents for the mixture
Bernoulli process are zero. Nevertheless, we feel it may be better to
present a complete calculation, which is not that long.  By the
results of \cite{LiYu00}, our reasoning can be generalized to mixtures
of $k$-parameter exponential families but we skip this topic to
present a simple example in a sufficient detail.

For the direct evaluation of the Hilberg exponents, it is convenient
to introduce a few further notations.  Let the (expected) entropy of a
random variable $X$ be written as
\begin{align}
  H_Q(X)=\sred_Q \kwad{-\log Q(X)}
  ,
\end{align}
whereas the (expected) mutual information between variables $X$ and
$Y$ will be written as 
\begin{align}
  I_Q(X;Y)=H_Q(X)+H_Q(Y)-H_Q(X,Y)
  .
\end{align}
Moreover, we define the partial sums
\begin{align}
  T_n&=\sum_{i=-n+1}^0 X_i
  ,
  \\
  S_n&=\sum_{i=1}^n X_i
  .
\end{align}

Now we can state the following result for the expected Hilberg exponents.
\begin{theorem}
  For measure (\ref{BayesianBernoulli}), we have $\delta^+_Q=\delta^-_Q=0$.
\end{theorem}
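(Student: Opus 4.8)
The plan is to reduce the claim to the two-sided estimate
\begin{align}
  0\le\sred_Q I^Q(n)\le\log(n+1)
  .
\end{align}
Once this is in hand we are done: since $\sred_Q I^Q(n)\ge 0$ we have $\log^+\sred_Q I^Q(n)=\log\okra{\sred_Q I^Q(n)+1}\le\log\okra{\log(n+1)+1}$, so that $\log^+\sred_Q I^Q(n)/\log n\to 0$, and because $\delta^\pm_Q\ge 0$ by definition this forces $\delta^+_Q=\delta^-_Q=0$. The left-hand inequality is free: averaging the definition of $I^Q(n)$ and using stationarity of $Q$ (so that every block of length $m$ has law $(\ref{BayesianBernoulli})$ with $n$ replaced by $m$),
\begin{align}
  \sred_Q I^Q(n)
  &=H_Q(X_{-n+1}^0)+H_Q(X_1^n)-H_Q(X_{-n+1}^n)
  \nonumber\\
  &=I_Q(X_{-n+1}^0;X_1^n)
  ,
\end{align}
which is nonnegative by subadditivity of Shannon entropy. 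So the whole content of the theorem is the upper bound, and the main thing to get right is to obtain it \emph{without} grinding through the asymptotics of $\sum_k\log\binom nk$.

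For the upper bound I would exploit that $(\ref{BayesianBernoulli})$ is a conditionally IID (de Finetti) mixture: on the block $X_{-n+1}^n$ adjoin an auxiliary variable $\Theta$ uniform on $[0,1]$ so that, given $\Theta=\theta$, the coordinates are IID Bernoulli$(\theta)$; the marginal law of the coordinates is then exactly $(\ref{BayesianBernoulli})$. Disjoint blocks are conditionally independent given $\Theta$, so $I_Q(X_{-n+1}^0;X_1^n\mid\Theta)=0$ and hence
\begin{align}
  I_Q(X_{-n+1}^0;X_1^n)
  \le
  I_Q\okra{X_{-n+1}^0;\,X_1^n,\Theta}
  =
  I_Q(X_{-n+1}^0;\Theta)
  .
\end{align}
Next, $T_n$ is a sufficient statistic: given $T_n=k$, the block $X_{-n+1}^0$ is uniform over the $\binom nk$ strings of that sum regardless of $\theta$, so $I_Q(X_{-n+1}^0;\Theta\mid T_n)=0$ and therefore $I_Q(X_{-n+1}^0;\Theta)=I_Q(T_n;\Theta)\le H_Q(T_n)$ (mutual information is at most the entropy of the discrete variable $T_n$). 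Finally, a one-line count from $(\ref{BayesianBernoulli})$ gives $Q(T_n=k)=\binom nk\cdot\frac1{n+1}\binom nk^{-1}=\frac1{n+1}$ for $k=0,\dots,n$, i.e.\ $T_n$ is uniform, whence $H_Q(T_n)=\log(n+1)$. Chaining these inequalities yields $\sred_Q I^Q(n)\le\log(n+1)$, completing the estimate.

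There is no genuinely hard step; the only pitfall is the binomial bookkeeping just mentioned, which the sufficiency/conditional-independence route sidesteps. If a more computational variant is preferred, one can bound the per-parameter redundancy instead: from $\binom nk\le 2^{nH(k/n)}$ (with $H$ the binary entropy function) and Jensen's inequality applied to the concave $H$, one gets $\sred_{B_\theta}\kwad{-\log Q(X_1^n)}\le\log(n+1)+nH(\theta)$, so $H_Q(X_1^n)=n\int_0^1 H(\theta)\,d\theta+R_n$ with $0\le R_n\le\log(n+1)$; the linear term then cancels in $\sred_Q I^Q(n)=2H_Q(X_1^n)-H_Q(X_{-n+1}^n)=2R_n-R_{2n}$, leaving an $O(\log n)$ remainder. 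Either way the theorem follows, and, via the redundancy estimates of \cite{LiYu00}, the argument extends to mixtures of $k$-parameter exponential families.
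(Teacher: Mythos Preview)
Your proof is correct and arrives at the same bound $0\le\sred_Q I^Q(n)\le\log(n+1)$ as the paper, via the same underlying idea (sufficiency of the block sum together with its uniform marginal under the prior). The only substantive difference is the conditioning variable: the paper observes that $X_{-n+1}^0$ and $X_1^n$ are conditionally independent given the pair $(T_n,S_n)$, which yields the pointwise identity $I^Q(n)=-\log\dfrac{Q(T_n)Q(S_n)}{Q(T_n,S_n)}$ and hence $\sred_Q I^Q(n)=I_Q(T_n;S_n)\le H_Q(S_n)=\log(n+1)$, whereas you condition on the latent parameter $\Theta$ and obtain only the inequality $\sred_Q I^Q(n)\le I_Q(T_n;\Theta)\le H_Q(T_n)=\log(n+1)$. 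The paper's route has the small advantage of staying entirely with discrete variables and producing an exact pointwise formula for $I^Q(n)$ (which is reused in the companion theorem on $\gamma^\pm_Q$); your route is equally valid and perhaps more transparently tied to the de Finetti structure and the redundancy literature you cite.
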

\begin{proof}
   It can be easily shown that $X_{-n+1}^0$
  and $X_1^n$ are conditionally independent given $T_n$ and
  $S_n$. Hence
  \begin{align}
    I^Q(n)
    =
    -\log \frac{Q(T_n)Q(S_n)}{Q(T_n,S_n)}
  \end{align}
  so the expected mutual information equals $\sred_Q
  I^Q(n)=I_Q(T_n;S_n)$.  Variable $S_n$ assumes under $Q$ each value
  in $\klam{0,1,...,n}$ with equal probability $(n+1)^{-1}$. Hence
  $0\le I_Q(T_n;S_n)\le H_Q(S_n)=\log (n+1)$, which implies the claim.
\end{proof}

The random Hilberg exponents $\gamma^\pm_Q$ for the mixture Bernoulli
process also vanish. This follows from $\delta^\pm_Q=0$ by Theorem
\ref{theoMeasureErgodicAnalogue}.  It may be insightful, however,
to compute $\gamma^\pm_Q$ directly, following the calculation scheme
in \cite{Atteson99,BarronRissanenYu98,LiYu00}.
\begin{theorem}
  For measure (\ref{BayesianBernoulli}), 
  $\gamma^+_Q=\gamma^-_Q=0$ holds $Q$-almost surely.
\end{theorem}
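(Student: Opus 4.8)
The plan is to bound the pointwise mutual information $I^Q(n)$ directly, without appealing to Theorem~\ref{theoMeasureErgodicAnalogue}. Starting from the conditional independence of $X_{-n+1}^0$ and $X_1^n$ given $(T_n,S_n)$ that was used in the proof of the previous theorem, we have $I^Q(n)=-\log Q(T_n)-\log Q(S_n)+\log Q(T_n,S_n)$. The first step is a purely deterministic bound: since $Q(T_n,S_n)\le Q(T_n)$ (it is a marginal) and $Q(S_n)=(n+1)^{-1}$ (each value of $S_n$ is equiprobable under $Q$, as already noted), we get $I^Q(n)\le -\log Q(S_n)=\log(n+1)$ for every realization.

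Next I would feed this into the definition of the random exponents. Because $\log^+$ is nondecreasing and $\log(n+1)\ge 0$, we have $\log^+ I^Q(n)\le\log^+\log(n+1)=\log(\log(n+1)+1)$, the inequality being trivial on the event $I^Q(n)<0$ where the left-hand side vanishes. Dividing by $\log n$ gives $\log^+ I^Q(n)/\log n\le\log(\log(n+1)+1)/\log n\to 0$, hence $\gamma^+_Q=0$, and then $\gamma^-_Q=0$ as well since $0\le\gamma^-_Q\le\gamma^+_Q$; in fact both equalities hold surely, which is more than the asserted $Q$-almost sure statement. If one instead wishes to exhibit the exact order of $I^Q(n)$ in the spirit of the redundancy calculations of \cite{Atteson99,BarronRissanenYu98,LiYu00}, one expands $-\log Q(X_{-n+1}^0)=\log(n+1)+\log\binom{n}{T_n}$, $-\log Q(X_1^n)=\log(n+1)+\log\binom{n}{S_n}$ and $-\log Q(X_{-n+1}^n)=\log(2n+1)+\log\binom{2n}{T_n+S_n}$ by Stirling's formula. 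On the $Q$-typical event $T_n/n\to\theta$ and $S_n/n\to\theta$ with $\theta\in(0,1)$, each binomial coefficient $\binom{N}{M}$ with $M/N$ bounded away from $0$ and $1$ contributes $NH(M/N)-\tfrac12\log N+O(1)$; the $\tfrac12\log$ terms combine to $+\tfrac12\log n+O(1)$, while the entropy terms collapse to $n\big[H(T_n/n)+H(S_n/n)-2H(\tfrac{T_n+S_n}{2n})\big]$, which by concavity of $H$ and a second-order Taylor expansion is $\le 0$ and of magnitude $O\big((S_n-T_n)^2/n\big)$, hence $O(\log\log n)$ $Q$-almost surely by the law of the iterated logarithm. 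This yields the sharper statement $I^Q(n)=\tfrac12\log n+O(\log\log n)$ $Q$-a.s., which again gives $\gamma^\pm_Q=0$.

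There is essentially no obstacle on the short route beyond keeping in mind that $I^Q(n)$ need not be pointwise nonnegative, which is exactly what $\log^+$ absorbs. On the finer route the only delicate points are the uniform control of the Stirling remainders (immediate once $T_n/n$ and $S_n/n$ are bounded away from $0$ and $1$, which holds $Q$-a.s.) and the degenerate parameters $\theta\in\{0,1\}$: on that $Q$-null set $T_n=S_n$ is constant and $I^Q(n)=\log\frac{(n+1)^2}{2n+1}\sim\log n$, so the ``$\tfrac12\log n$'' asymptotics genuinely fails there --- which is precisely why the statement is phrased ``$Q$-almost surely'' even though the crude bound already gives $\gamma^\pm_Q=0$ surely.
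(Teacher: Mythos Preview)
Your short route is correct and is genuinely simpler than the paper's argument. The paper never uses the deterministic bound $I^Q(n)\le\log(n+1)$; it works entirely along your ``finer route'': it passes to the ergodic components $E$ (Bernoulli with parameter $\theta$), expands the three binomial coefficients by Stirling, Taylor-expands the entropy terms about $\theta$, and controls the quadratic remainders by the law of the iterated logarithm, concluding that $I^Q(n)$ differs from a deterministic quantity of order $\log n$ by $O(\log\log n)$, $E$-almost surely for each $\theta\in(0,1)$. Your one-line observation $Q(T_n,S_n)\le Q(T_n)$ together with $Q(S_n)=(n+1)^{-1}$ bypasses all of this and even upgrades ``$Q$-almost surely'' to ``surely'', at the price of not revealing the exact logarithmic order. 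Your finer route is essentially the paper's computation, with the minor and slightly cleaner variation that you Taylor-expand the entropies about the midpoint $(T_n+S_n)/(2n)$, so that only the single difference $S_n-T_n$ enters the remainder, whereas the paper expands about $\theta$ and invokes the LIL separately for $T_n-n\theta$, $S_n-n\theta$, and $T_n+S_n-2n\theta$.
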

\begin{proof}
  Measure $Q$ defined in (\ref{BayesianBernoulli}) is not ergodic. Its
  random ergodic measure (\ref{RandomErgodic}) takes values of the IID
  measures
\begin{align*}
  E(x_1^n)= 
  \theta^{\sum_{i=1}^n x_i}(1-\theta)^{n-\sum_{i=1}^n x_i}
  ,
\end{align*}
where $\theta$ is a random variable uniformly distributed on $(0,1)$.
Now we will show that $\gamma^+_Q=\gamma^-_Q=0$ holds $E$-almost
surely for any $\theta$, which implies that $\gamma^+_Q=\gamma^-_Q=0$
holds $Q$-almost surely. For this aim we will use the Stirling
approximation
\begin{align*}
  n!=\sqrt{2\pi n}\okra{\frac{n}{e}}^n\okra{1+o(1)}
  .
\end{align*}
Hence the logarithm of the binomial coefficient is
\begin{align*}
  \binom{n}{k}=
  \frac{1}{2}\log\frac{1}{2\pi}
  +
  \frac{1}{2}\log\frac{n}{k(n-k)}+
  nH\okra{\frac{k}{n}}
  + o(1)
  ,
\end{align*}
where $H(p)=-p\log p -(1-p)\log(1-p)$ is the entropy of probability
distribution $(p,1-p)$.  Thus we obtain
\begin{align*}
  I^Q(n)
  &=
  \log\frac{(n+1)^2}{2n+1}
  +
  \log \frac{\binom{n}{T_n}\binom{n}{S_n}}{\binom{2n}{T_n+S_n}}
  \\
  &=
  \log\frac{(n+1)^2}{2n+1}
  +
  \frac{1}{2}\log\frac{1}{2\pi}
  \\
  &\phantom{=}
  +
  \frac{1}{2}\log\frac{n}{T_n(n-T_n)}
  +
  \frac{1}{2}\log\frac{n}{S_n(n-S_n)}
  \\
  &\phantom{=}
  -
  \frac{1}{2}\log\frac{n}{(T_n+S_n)(n-T_n-S_n)}
  \\
  &\phantom{=}
  +
  nH\okra{\frac{T_n}{n}}+nH\okra{\frac{S_n}{n}}-2nH\okra{\frac{T_n+S_n}{2n}}
  \\
  &\phantom{=}
  +
  o(1)
  .
\end{align*}

The sequel is straightforward. Define the partial sums
$T_n=\sum_{i=-n+1}^0 X_i$ and $S_n=\sum_{i=1}^n X_i$.  Quotients
$S_n/n$ and $T_n/n$ converge to $\theta$ $E$-almost surely.  Further,
we may use the Taylor expansion
\begin{align*}
  H\okra{\frac{T_n}{n}}
  =
  H\okra{\theta}
  +
  H'\okra{\theta}\okra{\frac{T_n}{n}-\theta}
  +
  \frac{1}{2}H''\okra{\theta_1}\okra{\frac{T_n}{n}-\theta}^2
  ,
\end{align*}
where $\theta_1\in\kwad{\frac{T_n}{n},\theta}$, and its analogues for
other entropies. This yields
\begin{align*}
  I^Q(n)
  &=
  \log\frac{(n+1)^2}{2n+1}
  +
  \frac{1}{2}\log\frac{1}{2\pi\theta(1-\theta)}
  \nonumber
  \\
  &\phantom{=}
  +
  \frac{1}{2}nH''\okra{\theta_1}\okra{\frac{T_n}{n}-\theta}^2
  +
  \frac{1}{2}nH''\okra{\theta_2}\okra{\frac{S_n}{n}-\theta}^2
  \nonumber
  \\
  &\phantom{=}
  -
  nH''\okra{\theta_3}\okra{\frac{T_n+S_n}{2n}-\theta}^2
  +
  o(1)
  .
\end{align*}
By the law of the iterated logarithm,
\begin{align*}
  \limsup_{n\rightarrow\infty} 
  \frac{\abs{T_n-n\theta}}{\sqrt{n\log\log n}}=C
\end{align*}
for a cerrtain $C\in(0,\infty)$ holds $E$-almost surely, and we have
similar laws for $S_n$ and $T_n+S_n$. Hence
\begin{align*}
  \limsup_{n\rightarrow\infty}
  \abs{
  I^Q(n)
  -
  \log\frac{(n+1)^2}{2n+1}
  -
  \frac{1}{2}\log\frac{1}{2\pi\theta(1-\theta)}
  }
  &\le A\log\log n
\end{align*}
for a certain $A\in(0,\infty)$.  Thus $\gamma^+_Q=\gamma^-_Q=0$ holds
$E$-almost surely for any $\theta$.
\end{proof}

In the next example we will exhibit a process for which Hilberg
exponents do not vanish. This process, introduced in
\cite{Debowski11b,Debowski12} under the name of a Santa Fe process is
also conditionally IID (nonergodic) but does not constitute a
$k$-parameter source. Its construction is partly motivated
linguistically.  Namely, we have certain statements $X_i$ that
describe for randomly selected indices $K_i=k$ the values of some
random binary variables $Z_k$, where the set of available indices
is countably infinite, $k\in\mathbb{N}$.

Formally, the Santa Fe process $(X_i)_{i\in\mathbb{Z}}$ is a sequence
of variables $X_i$ which consist of pairs
\begin{align}
  \label{exUDP}
  X_i&=(K_i,Z_{K_i})
  ,
\end{align}
where processes $(K_i)_{i\in\mathbb{Z}}$ and $(Z_k)_{k\in\mathbb{N}}$
are independent and distributed as follows. First, variables $Z_k$
are binary and uniformly distributed,
\begin{align}
  Q(Z_k=0)=Q(Z_k=1)&=1/2
  ,
  &
  (Z_k)_{k\in\mathbb{N}}&\sim \text{IID}
  .
\end{align}
Second, variables $K_i$ obey the power law
\begin{align}
  \label{ZetaK}
  Q(K_i=k)&=k^{-1/\beta}/\zeta(\beta^{-1})
  , 
  &
  (K_i)_{i\in\mathbb{Z}}&\sim \text{IID}
  ,
\end{align}
where $\beta\in(0,1)$ is a parameter and $\zeta(x)=\sum_{k=1}^\infty
k^{-x}$ is the zeta function. Let us note that, formally, random
variable $Y=\sum_{k=1}^\infty 2^{-k}Z_k$ could be considered a single
random real parameter of the process but the distribution of the
process $(X_i)_{i\in\mathbb{Z}}$ is not a differentiable function of
this parameter.  For this reason the Santa Fe process is not a
$1$-parameter source.

Like in the case of the mixture Bernoulli process, the Hilberg
exponents for the Santa Fe process are all equal but, unlike the case
of the mixture Bernoulli process, they do not vanish. Their common
value is the parameter $\beta$ in the distribution (\ref{ZetaK}).
\begin{theorem}
  \label{theoSantaFeExpected}
  For process (\ref{exUDP}), we have $\delta^+_Q=\delta^-_Q=\beta$.
\end{theorem}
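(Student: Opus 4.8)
The plan is to obtain a closed form for $\sred_Q I^Q(n)$ and then read off its power-law order. Since $\beta<1$ the variables $X_i$ have finite entropy, so $\sred_Q I^Q(n)=I_Q(X_{-n+1}^0;X_1^n)$. The key observation is that the two blocks share information only through those variables $Z_k$ whose index $k$ occurs in both blocks: the coordinate processes $(K_i)_{i\in\mathbb{Z}}$ and $(Z_k)_{k\in\mathbb{N}}$ are independent, and the $K$'s of the left block are independent of the $K$'s of the right block. Writing $p_k=Q(K_i=k)=k^{-1/\beta}/\zeta(\beta^{-1})$ and conditioning on $K_{-n+1}^0$ and $K_1^n$ — given which each block is a bijective function of the family $(Z_k)_k$ restricted to the set of indices that block reveals — one checks via the chain rule that all cross terms involving the $K$'s vanish, leaving the conditional mutual information between the two revealed $Z$-families. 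That equals the number of indices common to the two index sets, each shared $Z_k$ being a single uniform bit that is identical in both blocks. Averaging and using independence of the two index sets gives
\[
  \sred_Q I^Q(n)=\sum_{k=1}^\infty\kwad{1-(1-p_k)^n}^2 .
\]

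Next I would show the right-hand side is of exact order $n^\beta$. Put $M=(n/\zeta(\beta^{-1}))^\beta$, the value of $k$ at which $np_k=1$. Lower bound: each integer $k\le M$ satisfies $1-(1-p_k)^n\ge 1-e^{-np_k}\ge 1-e^{-1}$, so the series is at least $(1-e^{-1})^2\floor{M}\ge c_1n^\beta$ for $n$ large. Upper bound: split at $M$; the $\floor{M}\le n^\beta$ terms with $k\le M$ each contribute at most $1$, while for $k>M$ the bound $1-(1-p_k)^n\le np_k$ gives a tail at most $n^2\zeta(\beta^{-1})^{-2}\sum_{k>M}k^{-2/\beta}$, and since $2/\beta>1$ we have $\sum_{k>M}k^{-2/\beta}\le\frac{\beta}{2-\beta}M^{1-2/\beta}$, which after substituting $M=(n/\zeta(\beta^{-1}))^\beta$ is again of order $n^\beta$. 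Hence $c_1n^\beta\le\sred_Q I^Q(n)\le c_2n^\beta$ for all large $n$, with explicit positive constants $c_1,c_2$.

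Finally, since $\sred_Q I^Q(n)\ge 0$ we have $\log^+\sred_Q I^Q(n)=\log(\sred_Q I^Q(n)+1)$, which for large $n$ lies between $\log(c_1n^\beta)$ and $\log((c_2+1)n^\beta)$; dividing by $\log n$ and letting $n\to\infty$ forces both the $\liminf$ and the $\limsup$ to equal $\beta$, that is $\delta^-_Q=\delta^+_Q=\beta$. I expect the only genuinely delicate point to be the reduction in the first paragraph — the chain-rule bookkeeping showing $I_Q(X_{-n+1}^0;X_1^n)=\sum_k\kwad{1-(1-p_k)^n}^2$; the two-sided estimate of the resulting series is routine.
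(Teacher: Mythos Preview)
Your proof is correct. The paper's own proof of this theorem is a one-line citation to \cite[Proposition~1]{Debowski12}, which asserts that $\sred_Q I^Q(n)$ grows proportionally to $n^\beta$; you have effectively reconstructed that proposition, and your two-sided estimate of $\sum_k[1-(1-p_k)^n]^2$ is exactly the kind of bound the paper itself deploys later when analysing the modified Santa Fe process~(\ref{exUDP2}).

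One simplification worth noting: the ``delicate'' chain-rule reduction you flag can be bypassed by the pointwise identity the paper uses in the proof of the next theorem. From $Q(X_1^n)=Q(K_1^n)\,2^{-\card V(K_1^n)}$ one reads off directly
\[
  I^Q(n)=\card V(K_{-n+1}^0)+\card V(K_1^n)-\card V(K_{-n+1}^n)=\card\bigl(V(K_{-n+1}^0)\cap V(K_1^n)\bigr),
\]
and taking expectations (the two $K$-blocks being independent) gives $\sred_Q I^Q(n)=\sum_k[1-(1-p_k)^n]^2$ in one line. Your conditioning argument reaches the same formula and is fine, but this route is shorter and sidesteps the bookkeeping you were concerned about.
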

\begin{proof}
By \cite[Proposition 1]{Debowski12}, $\sred_Q I^Q(n)$ grows
proportionally to $n^\beta$. This implies the claim.
\end{proof}
\begin{theorem}
  For process (\ref{exUDP}), $\gamma^+_Q=\gamma^-_Q=\beta$ holds
  $Q$-almost surely.
\end{theorem}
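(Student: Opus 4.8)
The plan is to first identify $I^Q(n)$ with a transparent combinatorial quantity. Since $X_i=(K_i,Z_{K_i})$ with $(K_i)_{i\in\mathbb{Z}}$ and $(Z_k)_{k\in\mathbb{N}}$ independent and every $Z_k$ uniform, marginalizing out the $Z$-coordinates not used by a block gives $Q(x_1^n)=\left(\prod_{i=1}^n p_{k_i}\right)2^{-\abs{\klam{k_1,\dots,k_n}}}$, where $p_k=Q(K_i=k)=k^{-1/\beta}/\zeta(\beta^{-1})$, whenever the $Z$-coordinates of $x_1^n$ are mutually consistent. Substituting this into the definition of $I^Q(n)$, the terms $-\log p_{k_i}$ cancel because $(K_i)$ is IID, and inclusion--exclusion for cardinalities leaves, writing $A_n=\klam{K_{-n+1},\dots,K_0}$ and $B_n=\klam{K_1,\dots,K_n}$ viewed as sets,
\[
  I^Q(n)=\abs{A_n\cap B_n}=:V_n ,
\]
the number of symbols occurring both in the left and in the right block. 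Two consequences are worth recording: $V_n$ is a function of the IID sequence $(K_i)$ alone, and $V_n$ is nondecreasing in $n$ because both blocks only grow. By monotonicity, exactly the argument of Lemma~\ref{theoKolmogorovSubsequence} reduces the claim to estimating $I^Q(2^m)$ along the dyadic subsequence, since for $2^m\le n<2^{m+1}$ one has $\frac{\log^+ I^Q(2^m)}{(m+1)\log 2}\le\frac{\log^+ I^Q(n)}{\log n}\le\frac{\log^+ I^Q(2^{m+1})}{m\log 2}$.

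I would then prove the two matching bounds for $I^Q(2^m)$. For the lower bound, fix $\epsilon\in(0,\beta)$ and set $k_0=\floor{n^{\beta-\epsilon}}$; since $p_k$ is decreasing, $np_k\ge n\,k_0^{-1/\beta}/\zeta(\beta^{-1})\ge n^{\epsilon/\beta}/\zeta(\beta^{-1})$ for every $k\le k_0$, so such a $k$ is absent from a fixed block with probability at most $\exp\left(-n^{\epsilon/\beta}/\zeta(\beta^{-1})\right)$. A union bound over the $k_0\le n$ candidates and over the two blocks gives $Q(V_n<k_0)\le 2n\exp\left(-n^{\epsilon/\beta}/\zeta(\beta^{-1})\right)$, a quantity summable over $n=2^m$; the Borel--Cantelli lemma then yields $I^Q(2^m)\ge\floor{(2^m)^{\beta-\epsilon}}$ for all large $m$ almost surely, hence $\gamma^-_Q\ge\beta-\epsilon$, and letting $\epsilon\downarrow 0$ gives $\gamma^-_Q\ge\beta$. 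For the upper bound, the quickest route is to cite earlier results: by Theorem~\ref{theoMeasure} and Theorem~\ref{theoKolmogorov}(i) with $F=Q$, $\gamma^+_Q=\gamma^+_S\le\delta^+_S=\delta^+_Q$ $Q$-almost surely, and $\delta^+_Q=\beta$ by Theorem~\ref{theoSantaFeExpected}. A self-contained argument also works: put $k_0=\ceil{n^{\beta+\epsilon/2}}$ and split $V_n=\sum_{k\le k_0}\boole{k\in A_n\cap B_n}+W_n\le k_0+W_n$; using $1-(1-p_k)^n\le np_k$ and $2/\beta>2$,
\[
  \sred_Q W_n\le\sum_{k>k_0}(np_k)^2\le\frac{n^2\,k_0^{1-2/\beta}}{\zeta(\beta^{-1})^2\,(2/\beta-1)}\le C_\epsilon\,n^{\beta-\epsilon(1/\beta-1/2)}\le C_\epsilon\,n^{\beta-\epsilon/2},
\]
where $\beta<1$ is used in the last step. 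Markov's inequality and Borel--Cantelli along $n=2^m$ give $W_{2^m}<\tfrac12(2^m)^{\beta+\epsilon}$ eventually, and since $k_0\le\tfrac12(2^m)^{\beta+\epsilon}$ eventually, $I^Q(2^m)<(2^m)^{\beta+\epsilon}$ eventually, so $\gamma^+_Q\le\beta+\epsilon$ and hence $\gamma^+_Q\le\beta$. Combined with $\gamma^+_Q\ge\gamma^-_Q$ this gives $\gamma^+_Q=\gamma^-_Q=\beta$ $Q$-almost surely.

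The step I expect to be delicate is the upper-bound tail estimate, namely the choice of the cutoff $k_0$ between the ``frequent'' symbols, which essentially all reappear, and the ``rare'' ones. The natural choice $k_0\asymp n^\beta$ --- the scale at which $np_k\asymp 1$ --- makes $\sred_Q W_n\asymp n^\beta$, and Markov against the threshold $n^{\beta+\epsilon}$ then only yields probability $\asymp n^{-\epsilon}$, which fails to be summable even along the dyadic subsequence; pushing the cutoff slightly past the transition to $k_0\asymp n^{\beta+\epsilon/2}$ is what forces the Borel--Cantelli series to converge, at the harmless cost of replacing the trivial bound $k_0\asymp n^\beta$ on the frequent part by $k_0\asymp n^{\beta+\epsilon/2}$, still safely below the target $n^{\beta+\epsilon}$. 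The surplus exponent $1/\beta-1/2>0$ responsible for the decay of the tail sum uses $\beta<1$ in an essential way. Everything else --- the closed form for $I^Q(n)$, the monotone dyadic reduction, and the union bound for the lower tail --- is routine.
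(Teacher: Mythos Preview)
Your proof is correct and follows the paper's approach: the identification $I^Q(n)=\card\bigl(V(K_{-n+1}^0)\cap V(K_1^n)\bigr)$, the upper bound $\gamma^+_Q\le\delta^+_Q=\beta$ via the earlier theorems, and the lower bound $\gamma^-_Q\ge\beta$ via a union bound plus Borel--Cantelli are exactly the paper's ingredients. Your citation for the upper bound (Theorems~\ref{theoMeasure} and~\ref{theoKolmogorov}(i) rather than Theorem~\ref{theoMeasureErgodicAnalogue}) is in fact the more careful one, since the Santa Fe process is nonergodic and over a countably infinite alphabet; the self-contained upper-bound argument you add is a bonus not in the paper.
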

\begin{proof}
  By Theorems~\ref{theoMeasureErgodicAnalogue}
  and~\ref{theoSantaFeExpected} it suffices to prove that
  $\gamma^-_Q\ge\beta$. Let $V(k_1^n)$ denote the set of distinct
  values in sequence $k_1^n$. We have
  \begin{align*}
    Q(X_1^n)=Q(K_1^n)2^{-\card V(K_1^n)}
    .
  \end{align*}
  Hence
  \begin{align*}
    I^Q(n)
    &=\card V(K_{-n+1}^0)+\card V(K_1^n)-\card V(K_{-n+1}^n)
    \\
    &=\card (V(K_{-n+1}^0)\cap V(K_1^n))
    .
  \end{align*}

  Let $L_n=n^{\beta(1-\epsilon)}$, where $\epsilon>0$. We have 
  \begin{align*}
    &Q\okra{\klam{1,2,...,\floor{L_n}}\not\subset V(K_{-n+1}^0)\cap
      V(K_1^n)}
    \\
    &\le
    \sum_{k=1}^{\floor{L_n}} Q(k\not\in V(K_{-n+1}^0)\cap V(K_1^n))
    \\
    &\le
    \sum_{k=1}^{\floor{L_n}} 
    \kwad{
      Q(k\not\in V(K_{-n+1}^0))
      + 
      Q(k\not\in V(K_1^n)) 
    }
    \\
    &=
    \sum_{k=1}^{\floor{L_n}} 2(1-Q(K_i=k))^{n}
    \\
    &\le 
    2L_n \kwad{1-\frac{L_n^{-1/\beta}}{\zeta(\beta^{-1})}}^n
    \\
    &= 
    2L_n \exp\kwad{n\ln (1-L_n^{-1/\beta}/\zeta(\beta^{-1}))}
    \\
    &\le
    2L_n \exp\kwad{-nL_n^{-1/\beta}/\zeta(\beta^{-1})}
    \\
    &\le 
    2n^\beta \exp\kwad{-n^{\epsilon}/\zeta(\beta^{-1})}
    .
  \end{align*}

  Since 
  \begin{align*}
    \sum_{n=1}^\infty
    Q\okra{\klam{1,2,...,\floor{L_n}}\not\subset V(K_{-n+1}^0)\cap
      V(K_1^n)}<\infty
    ,
  \end{align*}
  hence, by the Borel-Cantelli lemma, sets
  $\klam{1,2,...,\floor{L_n}}$ are $Q$-almost surely subsets of
  $V(K_{-n+1}^0)\cap V(K_1^n)$ for all but finitely many $n$. In
  consequence, $I^Q(n)\ge \floor{n^{\beta(1-\epsilon)}}$ for those
  $n$, which implies $\gamma^-_Q\ge\beta$ since $\epsilon$ was chosen
  arbitrarily.
\end{proof}

It should be noted that both the measures of the mixture Bernoulli
process and the Santa Fe processes are nonergodic and
computable. Hence Corollaries \ref{theoMeasureComputable} and
\ref{theoMeasureNonergodic} apply to these sources.  There exist also
mixing (i.e., ergodic in particular) and computable measures $Q$ for
which exponents $\delta^+_Q=\delta^-_Q$ assume an arbitrary value in
$(0,1)$. These processes can be constructed as a modification of the
original Santa Fe process (\ref{exUDP}). For the construction, see
\cite{Debowski12}.
 
The third example will be a process for which the upper and the lower
expected Hilberg exponents are different. The process is a slight
modification of the Santa Fe process, though different than that
discussed in \cite{Debowski12}. Consider a sequence of fixed numbers
$(a_k)_{k\in\mathbb{N}}$ where $a_k\in\klam{0,1}$. Let
\begin{align}
  \label{exUDP2}
  X_i&=(K_i,Y_{K_i})
  ,
\end{align}
where $Y_k=a_kZ_k$, whereas processes $(K_i)_{i\in\mathbb{Z}}$ and
$(Z_k)_{k\in\mathbb{N}}$ are independent and distributed as for the
original Santa Fe process. If $a_k\neq 0$ for some $k$, process
(\ref{exUDP2}) is also nonergodic.
\begin{theorem}
  There exists such a sequence $(a_k)_{k\in\mathbb{N}}$ that for
  process (\ref{exUDP2}), we have $\delta^+_Q=\beta$ and
  $\delta^-_Q=0$.
\end{theorem}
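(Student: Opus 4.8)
The plan is to reduce the claim to an explicit expression for $\sred_Q I^Q(n)$ and then to pick $(a_k)_{k\in\mathbb{N}}$ with a lacunary ``block'' structure, so that $\sred_Q I^Q(n)$ is of order $n^\beta$ along one subsequence of block lengths but stays bounded along a much sparser subsequence.

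First I would repeat the combinatorial analysis of the Santa Fe process for the modified variables (\ref{exUDP2}). Since $Y_k=a_kZ_k$ carries no information when $a_k=0$ and is a uniform binary variable when $a_k=1$, one gets $Q(X_1^n)=Q(K_1^n)\,2^{-\card V_1(K_1^n)}$, where $V_1(k_1^n)=\klam{k\in V(k_1^n):a_k=1}$ collects the distinct ``active'' indices occurring in $k_1^n$. As $(K_i)$ is IID, the factors $Q(K_\cdot)$ cancel in $I^Q(n)$ exactly as in the proof of the preceding theorem, leaving $I^Q(n)=\card\okra{V_1(K_{-n+1}^0)\cap V_1(K_1^n)}$; and since the two half-blocks are disjoint IID samples, taking expectations yields
\begin{align*}
  \sred_Q I^Q(n)&=\sum_{k:\,a_k=1}\okra{1-(1-p_k)^n}^2
  ,
  \qquad
  p_k=\frac{k^{-1/\beta}}{\zeta(\beta^{-1})}
  .
\end{align*}
This sum is bounded term by term by the corresponding sum for the original Santa Fe process, which is of order $n^\beta$ by \cite[Proposition 1]{Debowski12}; hence $\sred_Q I^Q(n)\le Cn^\beta$ and $\delta^+_Q\le\beta$ with no further work.

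Next I fix the sequence. Let $d_0=0$ and define a tower-like sequence by $d_j=2^{c_j}$ and $c_{j+1}=2d_j$ starting from some $c_1\ge 2$, and set $a_k=1$ precisely for $k\in\bigcup_{j\ge1}(d_{j-1},c_j]$. Then the active set is ``full'' up to $c_j$, so $\card\klam{k\le c_j:a_k=1}\ge c_j/2$, but it is completely empty on the long window $(c_j,d_j]$. For the lower bound on $\delta^+_Q$ I take $n_j=\ceil{c_j^{1/\beta}}$; then $n_jp_k\ge 1/\zeta(\beta^{-1})$ for every $k\le c_j$, so $\okra{1-(1-p_k)^{n_j}}^2\ge\okra{1-e^{-1/\zeta(\beta^{-1})}}^2=:c'>0$, and summing over the at least $c_j/2$ active indices below $c_j$ gives $\sred_Q I^Q(n_j)\ge (c'/2)c_j\ge (c'/4)n_j^\beta$. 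Hence $\log^+\sred_Q I^Q(n_j)/\log n_j\to\beta$, which together with $\delta^+_Q\le\beta$ forces $\delta^+_Q=\beta$.

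To obtain $\delta^-_Q=0$ I choose $m_j=\ceil{(c_jd_j)^{1/(2\beta)}}$, so that $m_j^\beta$ lies deep inside the empty window $(c_j,d_j)$. I split $\sred_Q I^Q(m_j)$ into the contribution of active indices $k\le c_j$, which is at most $\card\klam{k\le c_j:a_k=1}\le c_j$, and the contribution of active indices $k>c_j$; since nothing is active on $(c_j,d_j]$, the latter involves only $k>d_j$ and is at most $\sum_{k>d_j}(m_jp_k)^2\le C\,m_j^2\,d_j^{1-2/\beta}\le C'\,c_j^{1/\beta}d_j^{1-1/\beta}$, which tends to $0$ because $1-1/\beta<0$ while $d_j=2^{c_j}$ beats every power of $c_j$. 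Thus $\sred_Q I^Q(m_j)\le c_j+1$ for large $j$, whereas $\log m_j\ge\tfrac{1}{2\beta}\log d_j=\tfrac{\log 2}{2\beta}c_j$, so $\log^+\sred_Q I^Q(m_j)/\log m_j\le O(\log c_j/c_j)\to0$ and $\delta^-_Q=0$. The step that needs genuine care is exactly this tail estimate: one must verify that, with $m_j^\beta$ sitting in the gap, the indices past $d_j$ contribute only $o(1)$, which forces $d_j$ to be superpolynomially larger than $c_j$---precisely what $d_j=2^{c_j}$ provides. Everything else is routine bookkeeping, using the subadditivity (\ref{LogSubadditivity}) of $\log^+$ to absorb the additive constants.
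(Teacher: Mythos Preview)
Your proof is correct and follows essentially the same route as the paper: derive the explicit formula $\sred_Q I^Q(n)=\sum_{k:a_k=1}\okra{1-(1-p_k)^n}^2$, bound $\delta^+_Q\le\beta$ by comparison with the original Santa Fe process, and construct a lacunary block pattern for $(a_k)$ so that $\sred_Q I^Q$ is of order $n^\beta$ along one subsequence and subpolynomial along another. The only difference is stylistic---the paper specifies the blocks implicitly via sequences $(b_m),(c_m)$ satisfying growth inequalities and uses the general bound $\sred_Q I^Q(n)\le\sum_{k\le\floor{n^{2\beta}}}a_k+C$, whereas you give an explicit tower $d_j=2^{c_j}$, $c_{j+1}=2d_j$ and exploit the empty window $(c_j,d_j]$ directly in the tail estimate.
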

\begin{proof}
Analogously as in the proof of \cite[Proposition 1]{Debowski12}, we
obtain
\begin{align*}
  \sred_Q I^Q(n)=\sum_{k=1}^\infty
  a_k\okra{1-\okra{1-\frac{A}{k^{1/\beta}}}^n}^2
  ,
\end{align*}
where $A:=1/\zeta(\beta^{-1})$. In case of the original Santa Fe
process, we have $a_k=1$ for all $k$ and then $\sred_Q I^Q(n)$ is
asymptotically proportional to $n^\beta$ \cite[Proposition
1]{Debowski12}. Thus, it is sufficient to show that
$\delta^+_Q\ge\beta$ and $\delta^-_Q\le 0$ for a certain sequence
$(a_k)_{k\in\mathbb{N}}$.

We have two bounds
\begin{align*}
  \okra{1-\frac{A}{k^{1/\beta}}}^n&\ge
  \max\klam{0,\frac{nA}{k^{1/\beta}}}
  ,
  \\
  \okra{1-\frac{A}{k^{1/\beta}}}^n&=
  \exp\okra{n\ln\okra{1-\frac{A}{k^{1/\beta}}}}
  \le
  \exp\okra{-\frac{nA}{k^{1/\beta}}}
  .
\end{align*}
Hence
\begin{align*}
  \sred_Q I^Q(n)&\le
  \sum_{k=1}^{\floor{n^{2\beta}}} a_k +
  \sum_{k=\floor{n^{2\beta}}+1}^\infty a_k
  \okra{\frac{nA}{k^{1/\beta}}}^2
  \\
  &\le
  \sum_{k=1}^{\floor{n^{2\beta}}} a_k +
  1+
  A^2n^2
  \int_{n^{2\beta}}^\infty\frac{1}{k^{2/\beta}} dk
  \\
  &=
  \sum_{k=1}^{\floor{n^{2\beta}}} a_k +
  1+
  A^2n^2
  \frac{\beta}{2-\beta} (n^{2\beta})^{\frac{\beta-2}{\beta}}
  \le
  \sum_{k=1}^{\floor{n^{2\beta}}} a_k +
  \okra{1+\frac{A^2\beta}{2-\beta}}
  ,
  \\
  \sred_Q I^Q(n)&\ge
  \sum_{k=1}^\infty a_k
  \okra{1-\exp\okra{-\frac{nA}{k^{1/\beta}}}}
  \ge \sum_{k=1}^{\floor{n^\beta}} a_k
  \okra{1-\exp\okra{-A}}
  .
\end{align*}

Having these two auxiliary results, we can easily show that
$\delta^+_Q\ge\beta$ and $\delta^-_Q\le 0$ for the following sequence
$(a_k)_{k\in\mathbb{N}}$. Let $(b_m)_{m\in\mathbb{N}}$ and
$(c_m)_{m\in\mathbb{N}}$ be two sequences of natural numbers, where
additionally $c_0=0$
and $$\floor{c_{m-1}^{\beta}}<\floor{b_m^{2\beta}}<\floor{c_m^{\beta}}$$
for all $m$, and let $(\epsilon_m)_{m\in\mathbb{N}}$ be a sequence of
real numbers $\epsilon_m=\beta/m$. We put
\begin{align*}
  a_k=
  \begin{cases}
  0, &  \floor{c_{m-1}^{\beta}}<k\le \floor{b_m^{2\beta}},
  \\
  1, & \floor{b_m^{2\beta}}<k\le \floor{c_m^{\beta}}
  .
  \end{cases}
\end{align*}
As for sequences $(b_m)_{m\in\mathbb{N}}$ and
$(c_m)_{m\in\mathbb{N}}$, we choose them to satisfy
\begin{align*}
  \floor{c_{m-1}^{\beta}}+\okra{1+\frac{A^2\beta}{2-\beta}}
  &\le b_m^{\epsilon_m}
  ,
  \\
  (\floor{c_m^{\beta}}-\floor{b_m^{2\beta}})\okra{1-\exp\okra{-A}}
  &\ge
  c_m^{\beta-\epsilon_m}
  .
\end{align*}
In this way we obtain
\begin{align*}
  \sred_Q I^Q(b_m)&\le
  \floor{c_{m-1}^{\beta}}+\okra{1+\frac{A^2\beta}{2-\beta}}
  \le b_m^{\epsilon_m}
  ,
  \\
  \sred_Q I^Q(c_m)&\ge
  \okra{\floor{c_m^{\beta}}-\floor{b_m^{2\beta}}}\okra{1-\exp\okra{-A}}
  \ge
  c_m^{\beta-\epsilon_m}
  .
\end{align*}
Hence $\delta^+_Q\ge\beta$ and $\delta^-_Q\le 0$, as requested.
\end{proof}

Evaluation of random Hilberg exponents for process (\ref{exUDP2})
seems difficult.

\section*{Acknowledgment}

The author thanks the anonymous referees for helpful comments that
encouraged him to improve the composition of this paper.

\bibliographystyle{IEEEtran}

\bibliography{0-journals-abbrv,0-publishers-abbrv,ai,mine,tcs,ql,books}

\end{document}